\documentclass[conference,letterpaper]{IEEEtran}

\addtolength{\topmargin}{9mm}
\usepackage{./report_default}
\usepackage{soul}

\newcommand{\lp}{\left(}
\newcommand{\rp}{\right)}
\newcommand{\lb}{\left[}
\newcommand{\rb}{\right]}
\newcommand{\lbp}{\left\{}
\newcommand{\rbp}{\right\}}
\newcommand{\lba}{\left\lvert}
\newcommand{\rba}{\right\rvert}

\newcommand{\mcal}{\mathcal}
\newcommand{\mrm}{\mathrm}

\newcommand{\mbb}{\mathbb}

\newcommand{\lce}{\left\lceil}
\newcommand{\rce}{\right\rceil}

\newcommand{\E}{\mathbb{E}}

\renewcommand{\Pr}{\mathbb{P}}

\makeatletter
\newcommand*{\indep}{%
  \mathbin{%
    \mathpalette{\@indep}{}%
  }%
}
\newcommand*{\nindep}{%
  \mathbin{
    \mathpalette{\@indep}{\not}
  }%
}
\newcommand*{\@indep}[2]{%
  \sbox0{$#1\perp\m@th$}
  \sbox2{$#1=$}
  \sbox4{$#1\vcenter{}$}
  \rlap{\copy0}
  \dimen@=\dimexpr\ht2-\ht4-.2pt\relax
  \kern\dimen@
  {#2}%
  \kern\dimen@
  \copy0 
} 
\makeatother


\newtheorem{theorem}{Theorem}
\newtheorem{corollary}{Corollary}
\newtheorem{lemma}{Lemma}

\newtheorem{property}{Property}




\interdisplaylinepenalty=2500 

\begin{document}

\allowdisplaybreaks

\twocolumn 

\title{Sequential Change Detection for Learning in Piecewise Stationary Bandit Environments}

\author{\IEEEauthorblockN{Yu-Han Huang and Venugopal V. Veeravalli}
\IEEEauthorblockA{ECE and CSL, The Grainger College of Engineering\\ University of Illinois Urbana-Champaign, Urbana, IL, USA \\ \{yuhanhh2,vvv\}@illinois.edu}}
\maketitle
\begin{abstract}
A finite-horizon variant of the quickest change detection problem is investigated, which is motivated by a change detection problem that arises in piecewise stationary bandits. The goal is to minimize the \emph{latency}, which is smallest threshold such that the probability that the detection delay exceeds the threshold is below a desired low level, while controlling the false alarm probability to a desired low level. When the pre- and post-change distributions are unknown, two tests are proposed as candidate solutions. These tests are shown to attain order-optimality in terms of the horizon. Furthermore, the growth in their latencies with respect to the false alarm probability and late detection probability satisfies a property that is desirable in regret analysis for piecewise stationary bandits. Numerical results are provided to validate the theoretical performance results.
\end{abstract}


\section{Introduction}
\label{sec:Intro}

The problem of quickest change detection (QCD) has been widely studied for its various applications in science and engineering. In a QCD problem, an agent sequentially observes a sequence of noisy samples, whose distribution changes at an unknown time due to a disturbance in the environment. The goal is to detect the change as soon as possible while maintaining a constraint on the false alarm probability. See \cite{poor-hadj-qcd-book-2009,tart-niki-bass-2014,vvv_qcd_overview,xie_vvv_qcd_overview} for books and survey articles on the topic.

The QCD problem naturally arises in piecewise stationary (PS) bandit environments, where the distributions of rewards on the arms changes at certain time steps and remains stationary between consecutive changes  \cite{liu2018change,cao2019nearly,besson2022efficient,dahlin2023controlling,wang2021near,zhou2020near,zhou2020nonstationary}. In our prior work \cite{huang2024high}, we formulated a finite-horizon variant of the QCD problem that is tailored to PS bandits, but under the unrealistic assumption that the pre- and post-change distributions are known. In this QCD problem, the goal is to minimize the \emph{latency}, which is smallest threshold such that the probability that the detection delay exceeds the threshold is below a desired low level, while controlling the false alarm probability to a desired low level. A time-varying threshold Cumulative Sum (TVT-CuSum) test is proposed in \cite{huang2024high} as a candidate solution. The TVT-CuSum test computes the CuSum statistic using knowledge of the underlying distributions, and declares a change whenever the CuSum statistic surpasses a threshold that increases logarithmically with time. However, the TVT-CuSum test is not applicable to PS bandits, since the underlying reward distributions are unknown in PS bandits. The QCD problem with unknown distributions was studied in \cite{lai2010sequential}, where the metrics used are the expected time to false alarm and the expected detection delay, differing from our prior work \cite{huang2024high}. To our knowledge, the closest variant of the QCD problem to our work  appears in \cite{maillard2019sequential}, where the performance metrics are similar to those in \cite{huang2024high}, and the underlying distributions are also unknown. However, the work in \cite{maillard2019sequential} does not address order-optimality with respect to the horizon—a critical aspect when employing change detectors in piecewise-stationary bandit settings \cite{huang2025change}.

For the purpose of developing change detectors applicable to PS bandits, we study the QCD problem introduced in \cite{huang2024high}, under the assumption that the distributions before and after the change are not known to the detector. As a first step in this study, we investigate the case where the distribution before change is known, but the distribution after the change is unknown, and we develop change detectors for this case. We then generalize these change detectors to the setting where the distributions before and after change are both unknown. 


The remainder of the paper is organized as follows: The formulation of the QCD problem we study is given in Section \ref{sec:ProbForm}. In Section~\ref{sec:post-unknown}, we propose change detectors for the QCD problem where the agent does not know the distribution after the change. We then generalize these change detectors to the QCD problem where the distributions before and after the change are both unknown in Section~\ref{sec:pre-post-unknown}. 
Numerical results validating the analysis are given in Section \ref{sec:sim}, and the concluding remarks are presented in Section \ref{sec:Sum}.

\section{Problem Formulation}
\label{sec:ProbForm}

Let $\lbp X_{n}:n \in \lbp1,\dots,T\rbp\rbp$ be a finite set of independent random variables indexed by time step $n$. The agent  observes these random variables sequentially over a finite horizon $T$. At a change-point $\nu\in\mbb{N}$ unknown to the agent, the distribution of the stochastic observations changes, i.e.,
\begin{align}
    X_{n}\sim\begin{dcases}
    f_{0},\;n < \nu\\
    f_{1},\;n\geq \nu
    \end{dcases}.\label{eq:sample_distr}
\end{align}
%
To be specific, before the change-point $\nu$, the observation $X_{n}$ follows the pre-change density $f_{0}$ with respect to some dominating measure $\lambda$. After the change-point $\nu$, $X_{n}$ follows the post change density $f_{1}$ with respect to the same dominating measure. We assume that the pre- and post-change densities are $\sigma^{2}$-sub-Gaussian. In Section \ref{sec:post-unknown}, we assume that the agent knows the pre-change density $f_{0}$ completely but only knows that the post-change density $f_{1}$ is $\sigma^{2}$-sub-Gaussian. In Section \ref{sec:pre-post-unknown}, we assume that the agent only knows that both densities are $\sigma^{2}$-sub-Gaussian. When the pre-change distribution is unknown, the agent needs at least a small number of pre-change observations to estimate $f_{0}$. Hence, following  \cite{lai2010sequential}, we assume the existence of a pre-change window of length $m$, during which no changes occur, i.e., $\nu > m$. This assumption ensures that the agent has at least $m$ pre-change observations. In Section \ref{sec:post-unknown}, because the pre-change distribution is assumed to be known, the agent does not need pre-change samples to learn the density $f_{0}$; thus, $m$ is set to be $0$. In addition, let $\mu_{i}$ be the mean of the pre-change density $f_{i}$ for $i \in \lbp 0,1 \rbp$. We define the change gap $\Delta$ to be the absolute difference between the pre- and post-change mean, i.e., $\Delta \coloneqq \lba \mu_{0} - \mu_{1} \rba$. In this work, given our focus on detecting changes in PS bandits, we assume that the pre- and post-change means are different, i.e., $\Delta > 0$, as change detectors in PS bandits only need to detect changes in the mean reward.

Let $\tau$ be the stopping time of a (causal) change detector, an algorithm used by the agent to detect changes in the stochastic observations. In accordance with notation that is commonly used in the QCD literature, we let $\Pr_{\nu}$ and $\E_{\nu}$ denote the probability measure and the expectation when the change-point occurs at $\nu$. Similarly, we use $\Pr_{\infty}$ and $\E_{\infty}$ to denote the probability measure and the expectation when no change occurs. Employing the metric proposed in \cite{huang2024high}, we define the 
\emph{latency} $d$ as 
%
\begin{equation}
\begin{aligned}
    d\coloneqq\inf\{& n:\in\lbp1,\dots,T\rbp:\Pr_{\nu}\lp\tau\geq\nu+n\rp\leq\delta_{\mathrm{D}},\\
    &\forall\,\nu\in\lbp m+1,\dots,T-n\rbp\}\label{eq:latency}.
\end{aligned}
\end{equation}
where $\delta_{\mathrm{D}}\in\lp0,1\rp$.
Note that $m = 0$ in Section \ref{sec:post-unknown}. Our goal is to minimize the latency under the constraint that the probability of false alarm over the horizon $T$, i.e., $\Pr_{\infty}\lp\tau\leq T\rp$, is small. As a result, the QCD problem can be defined as follows: for some (small) values $\delta_{\mathrm{F}}, \delta_{\mathrm{D}} \in \lp 0, 1 \rp$,
\begin{equation}
\begin{split}
    &\underset{\tau}{\textrm{minimize}} \quad d\\
    &\;\textrm{s.t.}\enspace\Pr_{\infty}\lp\tau\leq T\rp\leq\delta_{\mathrm{F}}\\
&\quad\quad\Pr_{\nu}\lp\tau\geq \nu + d \rp \leq \delta_{\mathrm{D}},\; \forall\, \nu \in \lbp m + 1, \dots, T - d \rbp.
\end{split}
\label{eq:QCD}
\end{equation}
Since the information about the horizon is not always available to the agent in bandit problems, we assume that the agent is oblivious to the horizon. The theoretical lower bound in Theorem 3 in \cite{huang2024high}, which is derived under that assumption that the agent knows both the pre- and post-change distributions,  demonstrates that the optimal solution to $d$ in \eqref{eq:QCD} is $\Omega \lp \log T \rp$. In addition, in the regret analyses of PS bandit algorithms in \cite{besson2022efficient, huang2025change}, $\delta_{\mrm{F}}$ and $\delta_{\mrm{D}}$ are set to $T^{-\gamma}$ for some $\gamma > 1$. Therefore to ensure that overall contribution of the latency to the regret is $\mcal{O} \lp \log T \rp$,  we require a good change detector to satisfy the following property:
\begin{property}\label{proper:good_CD}
The latency $d$ of a good change detector should be $\mcal{O} \lp \log T + \log \lp 1/\delta_{\mrm{F}} \rp + \log \lp 1/\delta_{\mrm{D}} \rp \rp$.
\end{property}



\section{Change Detectors with Unknown Post-Change Distribution}
\label{sec:post-unknown}
In this section we study the QCD problem in \eqref{eq:QCD}, under the assumption that the pre-change distribution is known (and $m=0$), while the post-change distribution is unknown except for the fact that it is $\sigma^2$-sub-Gaussian. In order to develop some insights into the design of good detectors under this assumption, we first revisit \eqref{eq:QCD} for the case where the pre- and post- change densities are known and $m = 0$, which was studied in \cite{huang2024high}.

With $f_0$ and $f_1$ known, we can construct 
 the Cumulative Sum (CuSum) statistic as follows:
\begin{equation}\label{eq:CuSum_statistic}
    C_{n} = \max_{1\leq k \leq n} \sum_{i=k}^n \log \frac{f_{1} \lp X_{i} \rp}{f_{0} \lp X_{i} \rp},\;n\in\lbp1,\dots,T\rbp 
\end{equation}
which satisfies the recursion:
\begin{equation}    \label{eq:CuSum_recursion}
 C_{n}   = \max\lbp C_{n-1}, 0 \rbp + \log \frac{f_{1} \lp X_{n} \rp}{f_{0} \lp X_{n} \rp}
\end{equation}
with $C_{0} = 0$. The TVT-CuSum test proposed in \cite{huang2024high} declares a change whenever the CuSum statistic surpasses a time-varying threshold. In particular, the TVT-CuSum test has a stopping time 
\begin{align}
    \tau_r \coloneqq \inf\lbp n\in\mbb{N}:\;C_n \geq\beta_0 \lp n,\delta_{\mathrm{F}},r\rp\rbp,\;r>1\label{eq:TVT_CuSum}
\end{align}
where 
\begin{equation}
    \beta_0 \lp n,\delta_{\mathrm{F}},r\rp\coloneqq\log\lp\zeta\lp r\rp\frac{n^{r}}{\delta_{\mathrm{F}}}\rp\label{eq:threshold}
\end{equation}
with $\zeta\lp r\rp \coloneqq\sum_{i=1}^{\infty}\frac{1}{i^{r}}$. 


The CuSum statistic, however, requires full knowledge of the pre- and post-change densities for computation. Consequently, the TVT-CuSum test cannot be applied when the information about the post-change density is unavailable. One way to tackle the unknown post-change distribution is to replace the CuSum statistic with the Generalized Likelihood Ratio (GLR) statistic (with unknown post-change distribution) \cite{lai1998information}, which can be defined as follows:
\begin{align}\label{eq:GLR_statistics-unknown-post}
    G_{n}\coloneqq\sup_{1\leq k \leq n}  \log \frac{\sup_{\mu\in\mbb{R}}\prod_{i=k}^{n}f_{\mu}(X_i)}{\prod_{i=k}^{n}f_{\mu_{0}}(X_i)},\;n\in\lbp1,\dots,T\rbp 
\end{align}
where $f_{\mu}$ is the Gaussian density with mean $\mu$ and variance $\sigma^{2}$. We emphasize that in \eqref{eq:GLR_statistics-unknown-post} we replace the pre-change density $f_{0}$ with $f_{\mu_{0}}$, which might be a different from $f_0$.  The reason for making this substitution is that the GLR statistic with $f_{\mu_{0}}$ in the denominator can be expressed in a form with only one supremum, making the computation of the statistic easier. See Lemma \ref{lem:GLR-kl-unknown-post} in Appendix \ref{sec:thm1} for  details.  The statistic  in \eqref{eq:GLR_statistics-unknown-post} can be viewed as the generalization of the CuSum statistic \cite{vvv_qcd_overview}, where the post-change distribution is Gaussian with arbitrary mean and variance $\sigma^{2}$. Since the threshold of the TVT-CuSum test $\beta_{0}$ is $\mcal{O} \lp \log \lp n/\delta_{\mrm{F}} \rp \rp$, it is reasonable to speculate that the threshold for the test using the GLR statistic is also $\mcal{O} \lp \log \lp n/\delta_{\mrm{F}} \rp \rp$. Taking cues from \cite{besson2022efficient,kaufmann2021mixture}, we propose the GLR test for unknown post-change distribution as follows:
\begin{align}\label{def:GLR-unknown-post}
    \tau_{\mrm{GLR}}\coloneqq\inf\lbp n\in\mbb{N}:\; G_{n} \geq\beta_{\mrm{GLR}}\lp n,\delta_{\mathrm{F}}\rp\rbp
\end{align}
where the threshold function is defined as
\begin{align}\label{def:GLR-unknown-post-thres}
\begin{aligned}
    \beta_{\mrm{GLR}}\lp n,\delta_{\mathrm{F}}\rp\coloneqq&3\log\lp1+\log\lp n\rp\rp\\
    &+\frac{5}{4}\log\lp\frac{3n^{3/2}}{\delta_{\mathrm{F}}}\rp+\frac{11}{2}
\end{aligned}
\end{align}
for $n \in \lbp 1,\dots, T \rbp$

A popular alternative of the CuSum statistic is the Shiryaev-Roberts (SR) statistic \cite{shiryaev2007optimal, vvv_qcd_overview}, which can be written as
\begin{equation}\label{eq:SR_statistic}
    S_{n} = \sum_{k = 1}^{n} \prod_{i=k}^n \frac{f_{1} \lp X_{i} \rp}{f_{0} \lp X_{i} \rp},\;n\in\lbp1,\dots,T\rbp
\end{equation}
which also satisfies the following recursion:
\begin{equation}    \label{eq:SR_recursion}
 S_{n}   = \lp S_{n-1} + 1 \rp \frac{f_{1} \lp X_{n} \rp}{f_{0} \lp X_{n} \rp}
\end{equation}
with $S_{0} = 0$. In the same manner as we  generalized the CuSum statistic, we can construct the  Generalized Shiryaev-Roberts (GSR) statistic (with unknown post-change distribution) as follows:
\begin{align}\label{eq:GSR_statistics-unknown-post}
    W_{n}\coloneqq\sum_{k = 1}^{n}  \frac{\sup_{\mu\in\mbb{R}}\prod_{i=k}^{n}f_{\mu}(X_i)}{\prod_{i=k}^{n}f_{\mu_{0}}(X_i)},\;n\in\lbp1,\dots,T\rbp.
\end{align}
Then, the corresponding GSR test for unknown post-change distribution can be defined as follows:
\begin{align}\label{def:GSR-unknown-post}
    \tau_{\mrm{GSR}}\coloneqq\inf\lbp n\in\mbb{N}:\; \log W_{n} \geq\beta_{\mrm{GSR}}\lp n,\delta_{\mathrm{F}}\rp\rbp
\end{align}
where the threshold function is defined as: 
\begin{align}\label{def:GSR-unknown-post-thres}
\begin{aligned}
    \beta_{\mrm{GSR}}\lp n,\delta_{\mathrm{F}}\rp\coloneqq&\beta_{\mrm{GLR}}\lp n,\delta_{\mathrm{F}}\rp + \log n.
\end{aligned}
\end{align}
for $n \in \lbp 1,\dots, T \rbp$.

In the following theorem, we show that both the GLR and GSR tests can effectively control the false alarm probability with a small latency. The proof of the theorem is given in Appendix \ref{sec:thm1}. 

\begin{theorem}[GLR and GSR tests with unknown post-change distribution] \label{thm:unknown-post}
Consider the GLR test in \eqref{def:GLR-unknown-post} and the GSR test in \eqref{def:GSR-unknown-post}. For any $T \in \mbb{N}$ and $\nu \in \lbp 1, \dots, T - d \rbp$, with $d$ defined as
\begin{align} \label{eq:latency-unknown-post}
    d = \lce\frac{2\sigma^{2}}{\Delta^{2}}\lp\sqrt{\beta\lp T,\delta_{\mathrm{F}}\rp}+\sqrt{\log\lp\frac{2}{\delta_{\mathrm{D}}}\rp}\rp^{2}\rce
\end{align}
the stopping times of the tests satisfy $\Pr_{\infty} \lp \tau \leq T \rp \leq \delta_{\mrm{F}}$ and $\Pr_{\nu} \lp \tau \geq \nu + d \rp \leq \delta_{\mrm{D}}$,  
where $\beta = \beta_{\mrm{GLR}}$ for the GLR test and $\beta = \beta_{\mrm{GSR}}$ for the GSR test.
\end{theorem}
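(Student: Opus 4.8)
\emph{Proof plan.} The plan is to reduce both statistics to explicit functions of the sub-Gaussian partial sums, establish the false-alarm constraint for the GLR test and transfer it to the GSR test by a domination argument, and verify the detection-delay constraint by a short concentration argument on the first $d$ post-change samples. Concretely, I would first apply Lemma~\ref{lem:GLR-kl-unknown-post} to resolve the inner supremum over $\mu$ in \eqref{eq:GLR_statistics-unknown-post} and \eqref{eq:GSR_statistics-unknown-post}: it is attained at the empirical mean of the segment, so that
\begin{gather*}
 G_{n}=\sup_{1\le k\le n}\frac{\lp\sum_{i=k}^{n}\lp X_{i}-\mu_{0}\rp\rp^{2}}{2\sigma^{2}\lp n-k+1\rp},\\
 W_{n}=\sum_{k=1}^{n}\exp\lp\frac{\lp\sum_{i=k}^{n}\lp X_{i}-\mu_{0}\rp\rp^{2}}{2\sigma^{2}\lp n-k+1\rp}\rp.
\end{gather*}
Since the summands of $W_{n}$ are nonnegative and there are $n$ of them, $G_{n}\le\log W_{n}\le\log n+G_{n}$ for every $n$. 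The right-hand inequality shows that $\log W_{n}\ge\beta_{\mrm{GSR}}\lp n,\delta_{\mrm{F}}\rp=\beta_{\mrm{GLR}}\lp n,\delta_{\mrm{F}}\rp+\log n$ forces $G_{n}\ge\beta_{\mrm{GLR}}\lp n,\delta_{\mrm{F}}\rp$, hence $\tau_{\mrm{GLR}}\le\tau_{\mrm{GSR}}$ and in particular $\Pr_{\infty}\lp\tau_{\mrm{GSR}}\le T\rp\le\Pr_{\infty}\lp\tau_{\mrm{GLR}}\le T\rp$; so it suffices to establish the false-alarm bound for the GLR test.

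For the false-alarm bound, I would use that under $\Pr_{\infty}$ the increments $X_{i}-\mu_{0}$ are independent, zero-mean and $\sigma^{2}$-sub-Gaussian — this is the only place where replacing $f_{0}$ by the Gaussian $f_{\mu_{0}}$ in the statistic matters, and it is harmless because only a sub-Gaussian tail bound is needed. For a fixed start index $k$, the process $\ell\mapsto\sum_{i=k}^{k+\ell-1}\lp X_{i}-\mu_{0}\rp$ is a sub-Gaussian martingale, so a time-uniform deviation inequality for self-normalized sub-Gaussian sums — obtained by a peeling/stitching argument or the method of mixtures, in the spirit of \cite{kaufmann2021mixture,besson2022efficient} — controls the probability that $\frac{\lp\sum_{i=k}^{k+\ell-1}(X_{i}-\mu_{0})\rp^{2}}{2\sigma^{2}\ell}$ ever exceeds $\log\lp1/\delta_{k}\rp$ plus an $\mcal{O}\lp\log\log\ell\rp$ correction by $\delta_{k}$. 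Choosing the $\delta_{k}$ summable to $\delta_{\mrm{F}}$ and taking a union bound over $k\in\mbb{N}$ reproduces precisely the threshold $\beta_{\mrm{GLR}}$ of \eqref{def:GLR-unknown-post-thres}: the $3\log\lp1+\log n\rp$ term is the iterated-logarithm correction, and the $n^{3/2}$ inside the logarithm is what remains after combining the per-horizon deviation with the union over start indices. This yields $\Pr_{\infty}\lp\tau_{\mrm{GLR}}\le T\rp\le\delta_{\mrm{F}}$, and then $\Pr_{\infty}\lp\tau_{\mrm{GSR}}\le T\rp\le\delta_{\mrm{F}}$ by the previous paragraph.

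For the detection-delay bound, fix $\nu\in\lbp1,\dots,T-d\rbp$ and evaluate the statistics at $n=\nu+d-1\le T$. Keeping only the segment that starts at $k=\nu$ in the supremum (or in the sum) gives $\min\lbp G_{n},\log W_{n}\rbp\ge\frac{\lp\sum_{i=\nu}^{n}\lp X_{i}-\mu_{0}\rp\rp^{2}}{2\sigma^{2}d}$. Under $\Pr_{\nu}$ the variables $X_{\nu},\dots,X_{n}$ are i.i.d.\ $\sigma^{2}$-sub-Gaussian with mean $\mu_{1}$, so a two-sided sub-Gaussian (Hoeffding) bound gives $\lba\sum_{i=\nu}^{n}\lp X_{i}-\mu_{1}\rp\rba\le\sqrt{2\sigma^{2}d\log\lp2/\delta_{\mrm{D}}\rp}$ with probability at least $1-\delta_{\mrm{D}}$; since $\lba\sum_{i=\nu}^{n}\lp\mu_{1}-\mu_{0}\rp\rba=d\Delta$, the triangle inequality yields $\lba\sum_{i=\nu}^{n}\lp X_{i}-\mu_{0}\rp\rba\ge d\Delta-\sqrt{2\sigma^{2}d\log\lp2/\delta_{\mrm{D}}\rp}\ge0$ on that event, the nonnegativity (hence the squaring that follows) being guaranteed by the definition of $d$. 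Dividing by $2\sigma^{2}d$ then shows that on this event $\min\lbp G_{n},\log W_{n}\rbp\ge\lp\Delta\sqrt{d/\lp2\sigma^{2}\rp}-\sqrt{\log\lp2/\delta_{\mrm{D}}\rp}\rp^{2}$, and a line of algebra shows this is $\ge\beta\lp T,\delta_{\mrm{F}}\rp\ge\beta\lp n,\delta_{\mrm{F}}\rp$ precisely when $d\ge\frac{2\sigma^{2}}{\Delta^{2}}\lp\sqrt{\beta\lp T,\delta_{\mrm{F}}\rp}+\sqrt{\log\lp2/\delta_{\mrm{D}}\rp}\rp^{2}$, which is how $d$ is set in \eqref{eq:latency-unknown-post} — with $\beta=\beta_{\mrm{GLR}}$ for the GLR test and $\beta=\beta_{\mrm{GSR}}$ for the GSR test, and using that both thresholds are nondecreasing in $n$. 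Hence on this event the relevant test has already stopped, i.e.\ $\tau<\nu+d$, so $\Pr_{\nu}\lp\tau\ge\nu+d\rp\le\delta_{\mrm{D}}$ for both tests.

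The explicit-form reduction and the detection-delay argument are routine; the main obstacle is the false-alarm step, namely producing the time-uniform concentration inequality with the exact constants appearing in $\beta_{\mrm{GLR}}$ — the $5/4$ coefficient, the additive $11/2$, and the $3/2$ exponent — which requires a careful peeling/mixture construction rather than a crude union bound over all segment lengths, and which is also what pins down the threshold increment $\log n$ used for the GSR test.
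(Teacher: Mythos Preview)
Your proposal is correct and follows essentially the same route as the paper: the explicit reduction via Lemma~\ref{lem:GLR-kl-unknown-post}, the mixture-martingale time-uniform bound from \cite{kaufmann2021mixture} combined with a union bound over start indices for the false-alarm constraint, the domination $\log W_{n}\le\log n+G_{n}$ to transfer the false-alarm bound from GLR to GSR, and the single-segment-plus-Hoeffding argument for the detection delay are exactly what the paper does. Your identification of the false-alarm step as the place where the specific constants in $\beta_{\mrm{GLR}}$ must be produced is also on point; the paper isolates this as Lemma~\ref{lem:mix_martin}, proved via Lemma~13 of \cite{kaufmann2021mixture} and Ville's inequality.
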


Since $\beta_{\mrm{GLR}}\lp T,\delta_{\mathrm{F}}\rp$ and $\beta_{\mrm{GSR}}\lp T,\delta_{\mathrm{F}}\rp$ are $\mcal{O} \lp \log T \rp$, the latency in \eqref{eq:latency-unknown-post} is $\mcal{O} \lp \log T \rp$, matching the theoretical lower bound in Theorem 3 in \cite{huang2024high}. In addition, the latency in \eqref{eq:latency-unknown-post} is $\mcal{O} \lp \log \lp 1/\delta_{\mrm{F}} \rp + \log \lp 1/\delta_{\mrm{D}} \rp \rp$. Hence, the GLR and GSR tests satisfy Property \ref{proper:good_CD}, indicating that these tests are good candidates  for generalization to the QCD problem of interest in PS bandits,  where both the pre- and post-change distributions are unknown.

\section{Change Detectors with Unknown Pre- and Post-Change Distribution}
\label{sec:pre-post-unknown}


In this section we study the QCD problem in \eqref{eq:QCD}, under the setting of interest for PS bandits, in which  both the pre- and post-change distributions are unknown, except for the fact that they are $\sigma^2$-sub-Gaussian.

Recall that $f_{\mu}$ is the density of a Gaussian random variable with mean $\mu$ and variance $\sigma^{2}$. Similar to the GLR statistic in \eqref{eq:GLR_statistics-unknown-post}, we can generalize the CuSum statistic to the GLR statistic (with unknown pre- and post-change distributions) as follows. For $n \in \lbp 1, \dots, T \rbp$,
\begin{equation}
\begin{aligned}
    &\tilde{G}_{n} \coloneqq \\
    &\sup_{1 \leq k \leq n} \log  \frac{\sup_{\mu_{0}' \in \mbb{R}} \sup_{\mu_{1}' \in \mbb{R}} \prod_{i = 1}^{k} f_{\mu_{0}'}\lp X_{i} \rp \prod_{i = k + 1}^{n} f_{\mu_{1}'}\lp X_{i} \rp}{\sup_{\mu\in\mbb{R}} \prod_{i = 1}^{n} f_{\mu} \lp X_{i} \rp}\label{eq:GLR_statistics_pre-post-unknown}.
\end{aligned}
\end{equation}
Then, taking the same cues from \cite{besson2022efficient,kaufmann2021mixture}, we propose the GLR test for unknown pre- and post-change distributions as follows:
\begin{align}\label{def:GLR-unknown-pre-post}
    \tilde{\tau}_{\mrm{GLR}} \coloneqq \inf \lbp n \in \mbb{N}:\; \tilde{G}_{n} \geq \tilde{\beta}_{\mrm{GLR}} \lp n, \delta_{\mathrm{F}} \rp \rbp
\end{align}
where the threshold function is defined as:  
\begin{align}\label{def:GLR-unknown-pre-post-thres}
\begin{aligned}
    \tilde{\beta}_{\mrm{GLR}} \lp n, \delta_{\mathrm{F}} \rp \coloneqq& 6\log\lp1+\log\lp n\rp\rp\\
    &+\frac{5}{2}\log\lp\frac{4n^{3/2}}{\delta_{\mathrm{F}}}\rp + 11.
\end{aligned}
\end{align}
for $n \in \lbp 1,\dots, T \rbp$.

Similarly, we can generalize the SR statistic to the GSR statistic (with unknown pre- and post-change distributions) as follows. For $n \in \lbp 1, \dots, T \rbp$,
\begin{equation}
\begin{aligned}
&\tilde{W}_{n} \coloneqq \\
&\sum_{k = 1}^{n} \frac{\sup_{\mu_{0}' \in \mbb{R}} \sup_{\mu_{1}' \in \mbb{R}} \prod_{i = 1}^{k} f_{\mu_{0}'} \lp X_{i} \rp \prod_{i = k + 1}^{n} f_{\mu_{1}'}\lp X_{i}\rp}{\sup_{\mu \in \mbb{R}} \prod_{i = 1}^{n} f_{\mu} \lp X_{i} \rp}.
\label{eq:GSR_statistics_pre-post-unknown}
\end{aligned}
\end{equation}
We can then construct GSR test for unknown pre- and post-change distributions as follows:
\begin{align}\label{def:GSR-unknown-pre-post}
    \tilde{\tau}_{\mrm{GSR}} \coloneqq \inf \lbp n \in \mbb{N}:\; \log \tilde{W}_{n} \geq \tilde{\beta}_{\mrm{GSR}} \lp n, \delta_{\mathrm{F}} \rp \rbp
\end{align}
where the threshold function is defines as: 
\begin{align}\label{def:GSR-unknown-pre-post-thres}
\begin{aligned}
    \tilde{\beta}_{\mrm{GSR}} \lp n, \delta_{\mathrm{F}} \rp \coloneqq& \tilde{\beta}_{\mrm{GLR}} \lp n, \delta_{\mathrm{F}} \rp + \log n.
\end{aligned}
\end{align}
for $n \in \lbp 1,\dots, T \rbp$.

In the following theorem, we illustrate that the GLR and GSR test with unknown pre- and post-change distributions can also detect changes with a low latency, while ensuring that the false alarm probability is small, given that there are enough pre-change observations. The proof of the theorem is given in Appendix \ref{sec:thm2}. 

\begin{theorem} [GLR and GSR tests with unknown pre- and post-change distributions]
\label{thm:pre-post-unknown}
Consider the GLR test in \eqref{def:GLR-unknown-pre-post} and the GSR test in \eqref{def:GSR-unknown-pre-post}, with 
\begin{equation} \label{eq:m}
    m \geq \frac{8\sigma^{2}}{\Delta^{2}}\beta \lp T,\delta_{\mrm{F}}\rp.
\end{equation}
Then, for $\nu\in\lbp m+1,\dots,T-d\rbp$, with $d$ defined as:
\begin{equation}
  d\coloneqq\lce\max\lbp\frac{8\sigma^{2}m\beta \lp T,\delta_{\mrm{F}}\rp}{\Delta^{2}m-8\sigma^{2}\beta \lp T,\delta_{\mrm{F}}\rp},\frac{\delta_{\mrm{F}}^{2/3}}{2^{16/15}\delta_{\mrm{D}}^{4/15}}-m\rbp\rce
    \label{eq:d}
\end{equation}
the false alarm probability and the latency satisfy $\Pr_{\infty}\lp \tau \leq T \rp \leq \delta_{\mrm{F}}$ and $\Pr_{\nu}\lp\tau\geq\nu+d\rp\leq\delta_{\mrm{D}}$, respectively,
where $\beta \lp T,\delta_{\mrm{F}}\rp = \tilde{\beta}_{\mrm{GLR}} \lp T,\delta_{\mrm{F}}\rp$ in \eqref{def:GLR-unknown-pre-post-thres}
for the GLR test and $\beta \lp T,\delta_{\mrm{F}}\rp = \tilde{\beta}_{\mrm{GSR}} \lp T,\delta_{\mrm{F}}\rp$ in \eqref{def:GSR-unknown-pre-post-thres} for the GSR test.
\end{theorem}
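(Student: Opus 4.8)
The plan is to follow the same route as the proof of Theorem~\ref{thm:unknown-post}, exploiting that for Gaussian densities both generalized statistics collapse to simple functions of sample means. Writing $\bar X_{a:b}$ for the empirical mean of $X_a,\dots,X_b$, a standard (ANOVA-type) sum-of-squares identity turns \eqref{eq:GLR_statistics_pre-post-unknown} into
\[
  \tilde G_n \;=\; \sup_{1\le k< n}\frac{k(n-k)}{2\sigma^2 n}\lp\bar X_{1:k}-\bar X_{k+1:n}\rp^2 ,
\]
and \eqref{eq:GSR_statistics_pre-post-unknown} into $\tilde W_n=\sum_{k=1}^{n}\exp\!\big(\tfrac{k(n-k)}{2\sigma^2 n}(\bar X_{1:k}-\bar X_{k+1:n})^2\big)$, so that $\tilde G_n\le\log\tilde W_n\le\tilde G_n+\log n$. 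Since $\tilde\beta_{\mrm{GSR}}=\tilde\beta_{\mrm{GLR}}+\log n$, these sandwich bounds give $\{\tilde\tau_{\mrm{GSR}}\le T\}\subseteq\{\exists\,n\le T:\tilde G_n\ge\tilde\beta_{\mrm{GLR}}(n)\}$ on the false-alarm side and $\tilde\tau_{\mrm{GSR}}\le\inf\{n:\tilde G_n\ge\tilde\beta_{\mrm{GSR}}(n)\}$ on the delay side. Hence it suffices to prove the false-alarm claim for $\tilde\tau_{\mrm{GLR}}$ and a delay bound of the stated form for a GLR-type rule with a generic increasing threshold $\beta(\cdot)$, and then to instantiate $\beta=\tilde\beta_{\mrm{GLR}}$ and $\beta=\tilde\beta_{\mrm{GSR}}$ respectively.

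\emph{False alarm.} Under $\Pr_\infty$ all observations are i.i.d.\ $\sigma^2$-sub-Gaussian with the common mean $\mu_0$. The key step is to decouple the two free means: lower-bounding the maximized denominator of $\tilde G_n$ by the likelihood evaluated at the fixed value $\mu_0$ gives
\[
  \tilde G_n\;\le\;\sup_{1\le k\le n}\frac{k}{2\sigma^2}\lp\bar X_{1:k}-\mu_0\rp^2\;+\;\sup_{1\le k\le n}\frac{n-k+1}{2\sigma^2}\lp\bar X_{k:n}-\mu_0\rp^2 ,
\]
i.e.\ $\tilde G_n$ is dominated by the sum of a ``forward'' and a ``backward'' one-parameter GLR statistic of exactly the type already controlled in the proof of Theorem~\ref{thm:unknown-post} (the forward statistic is the backward one for the time-reversed sequence $X_n,\dots,X_1$). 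Applying that time-uniform concentration bound to each of the two pieces, with the budget $\delta_{\mrm{F}}$ and the threshold $\tilde\beta_{\mrm{GLR}}(n)$ split between them and the $\sum_n n^{-3/2}<\infty$ weighting already baked into the threshold, yields $\Pr_\infty(\tilde\tau_{\mrm{GLR}}\le T)\le\delta_{\mrm{F}}$; the fact that the constants in \eqref{def:GLR-unknown-pre-post-thres} are essentially twice those in \eqref{def:GLR-unknown-post-thres} is exactly what this ``two copies'' argument consumes.

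\emph{Latency.} Fix $\nu\ge m+1$ and take $n=\nu+d-1\le T$; it is enough to bound $\Pr_\nu(\tilde G_n<\beta(n))$. Lower-bounding the supremum in $\tilde G_n$ by the oracle split $k=\nu-1$---whose first segment is the $\nu-1\ge m$ genuine pre-change samples (mean $\mu_0$) and whose second segment is the $d$ post-change samples (mean $\mu_1$)---gives $\tilde G_n\ge\frac{L}{2\sigma^2}(\bar X_{1:\nu-1}-\bar X_{\nu:n})^2$ with $L=\frac{(\nu-1)d}{(\nu-1)+d}\ge\frac{md}{m+d}$. The difference $\bar X_{1:\nu-1}-\bar X_{\nu:n}$ has mean of modulus $\Delta$ and is $(\sigma^2/L)$-sub-Gaussian, so outside an event of probability $2\exp(-t^2L/2\sigma^2)$ its modulus is at least $\Delta-t$; taking $t=\sqrt{(2\sigma^2/L)\log(2/\delta_{\mrm{D}})}$ makes that probability equal to $\delta_{\mrm{D}}$, and then $\tilde G_n\ge\beta(n)$ follows (using $\beta(n)\le\beta(T)$) as soon as $L\ge\frac{2\sigma^2}{\Delta^2}\lp\sqrt{\beta(T)}+\sqrt{\log(2/\delta_{\mrm{D}})}\rp^2$. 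Using \eqref{eq:m}, which forces $m$ above the latter quantity whenever $\log(2/\delta_{\mrm{D}})$ is not much larger than $\beta(T)$, and then solving $\frac{md}{m+d}\ge\frac{8\sigma^2}{\Delta^2}\beta(T)$ for $d$, produces the first branch of the maximum in \eqref{eq:d}; the second branch is a safeguard for the complementary regime of very small $\delta_{\mrm{D}}$, obtained by comparing the $\delta_{\mrm{F}}$- and $\delta_{\mrm{D}}$-dependent quantities directly. Replacing $\beta$ by $\tilde\beta_{\mrm{GSR}}$ for the GSR test leaves this argument unchanged.

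\emph{Main obstacle.} The hard part is the false-alarm step: establishing the clean forward/backward domination of $\tilde G_n$ and then verifying that the one-parameter GLR deviation inequality from the proof of Theorem~\ref{thm:unknown-post}, applied twice with a split budget, really does fit under the specific constants $6\log(1+\log n)$, $\tfrac52\log(4n^{3/2}/\delta_{\mrm{F}})$ and $11$ of \eqref{def:GLR-unknown-pre-post-thres} rather than something slightly larger---this is where the choice of budget split and the peeling constants must be chased carefully. The latency step is, by comparison, routine sub-Gaussian concentration once the oracle split is pinned down, the only delicate part being the bookkeeping that yields the two-branch maximum in \eqref{eq:d}.
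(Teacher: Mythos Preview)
Your false-alarm argument is essentially the paper's: replace the maximised denominator by the value at $\mu_0$, split the resulting sum into a ``forward'' and a ``backward'' one-parameter GLR piece, and apply the mixture-martingale inequality (Lemma~\ref{lem:mix_martin}) to each half with the doubled constants of $\tilde\beta_{\mrm{GLR}}$. The paper organises the split slightly differently (it keeps a single supremum and uses $x+y>2a\Rightarrow x>a$ or $y>a$, then peels over $k$ and $r$ separately), but the content is the same, and the GSR reduction via $\tilde G_n\le\log\tilde W_n\le\tilde G_n+\log n$ is exactly what the paper does.

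Your latency argument, however, does not prove the theorem as stated. You mimic the proof of Theorem~\ref{thm:unknown-post} and arrive at the sufficient condition $L\ge\frac{2\sigma^2}{\Delta^2}\bigl(\sqrt{\beta(T)}+\sqrt{\log(2/\delta_{\mrm{D}})}\bigr)^2$, then claim that the two branches of \eqref{eq:d} realise this condition in two regimes. But with $\nu=m+1$ one has $L\le m$, and the hypothesis \eqref{eq:m} only guarantees $m\ge\frac{8\sigma^2}{\Delta^2}\beta(T)$; your condition then forces $\beta(T)\ge\log(2/\delta_{\mrm{D}})$, which the theorem does \emph{not} assume. No choice of $d$ can rescue this, so your ``second branch as safeguard for very small $\delta_{\mrm{D}}$'' cannot be made to work along these lines, and the second term in \eqref{eq:d} does not arise from your argument at all.

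The paper's latency proof is genuinely different. After reaching $\Pr_\nu\bigl(\frac{L}{2\sigma^2}(\hat\mu_{1:\nu-1}-\hat\mu_{\nu:\nu+d-1})^2<\beta\bigr)$, it does \emph{not} introduce $\delta_{\mrm{D}}$ via a separate concentration threshold. Instead it shows, via the triangle inequality and the choices \eqref{eq:m}--\eqref{eq:d}, that this event is \emph{contained} in $\bigl\{\frac{L}{2\sigma^2}\bigl((\hat\mu_{1:\nu-1}-\hat\mu_{\nu:\nu+d-1})-(\mu_0-\mu_1)\bigr)^2\ge\beta\bigr\}$ (this is where the first branch of $d$ enters, ensuring $\Delta^2\ge 8\sigma^2(1/m+1/d)\beta(T)$). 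It then lower-bounds $\beta(\nu+d-1)\ge\frac{5}{2}\log\bigl(4(m+d)^{3/2}/\delta_{\mrm F}\bigr)$ and applies the two-sample sub-Gaussian bound to get probability at most $\delta_{\mrm F}^{5/2}/\bigl(16(m+d)^{15/4}\bigr)$; the second branch of \eqref{eq:d} is precisely the inequality $m+d\ge \delta_{\mrm F}^{2/3}/\bigl(2^{16/15}\delta_{\mrm D}^{4/15}\bigr)$ that turns this into $\le\delta_{\mrm D}$. So the two branches play fixed, simultaneous roles (implication step and final calibration), not complementary roles in two $\delta_{\mrm D}$-regimes as you suggest.
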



Recall that $\tilde{\beta}_{\mrm{GLR}} \lp T,\delta_{\mrm{F}}\rp$ and $\tilde{\beta}_{\mrm{GSR}} \lp T,\delta_{\mrm{F}}\rp$ are $\mcal{O} \lp \log \lp T/\delta_{\mrm{F}} \rp \rp$. In the following corollary, we show that the GLR and the GSR tests satisfy Property \ref{proper:good_CD} with an appropriate choice of $m$ if $\delta_{\mrm{F}} \leq \delta_{\mrm{D}}$.

\begin{corollary} \label{cor:m.d}
If $\delta_{\mrm{F}} \leq \delta_{\mrm{D}}$, then the GLR and the GSR tests satisfy Property \ref{proper:good_CD} with $m = \lce \frac{16\sigma^{2}}{\Delta^{2}}\beta \lp T,\delta_{\mrm{F}}\rp + \log \lp 1/\delta_{\mrm{D}} \rp \rce$, where $\beta \lp T,\delta_{\mrm{F}}\rp = \tilde{\beta}_{\mrm{GLR}} \lp T,\delta_{\mrm{F}}\rp$ in \eqref{def:GLR-unknown-pre-post-thres}
for the GLR test and $\beta \lp T,\delta_{\mrm{F}}\rp = \tilde{\beta}_{\mrm{GSR}} \lp T,\delta_{\mrm{F}}\rp$ in \eqref{def:GSR-unknown-pre-post-thres} for the GSR test.
\end{corollary}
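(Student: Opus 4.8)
The plan is to substitute the prescribed value of $m$ into the two arguments of the maximum in \eqref{eq:d} and bound each of them by $\mcal{O}\lp\log T+\log\lp 1/\delta_{\mrm{F}}\rp+\log\lp 1/\delta_{\mrm{D}}\rp\rp$, after first checking that this $m$ meets the admissibility requirement \eqref{eq:m} of Theorem~\ref{thm:pre-post-unknown}. The latter is immediate: since $m\geq\frac{16\sigma^{2}}{\Delta^{2}}\beta\lp T,\delta_{\mrm{F}}\rp\geq\frac{8\sigma^{2}}{\Delta^{2}}\beta\lp T,\delta_{\mrm{F}}\rp$, the hypothesis of the theorem holds, so $d$ in \eqref{eq:d} is a valid latency for the chosen $m$. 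It then suffices to show both that $m=\mcal{O}\lp\log T+\log\lp 1/\delta_{\mrm{F}}\rp+\log\lp 1/\delta_{\mrm{D}}\rp\rp$ (so $m$ itself does not spoil the bound when the latency is measured as an absolute detection delay) and that $d=\mcal{O}\lp\log T+\log\lp 1/\delta_{\mrm{F}}\rp+\log\lp 1/\delta_{\mrm{D}}\rp\rp$.

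For the first term in the maximum, write $b=\beta\lp T,\delta_{\mrm{F}}\rp$ and $c=\log\lp 1/\delta_{\mrm{D}}\rp$, so that $m=\lce\frac{16\sigma^{2}}{\Delta^{2}}b+c\rce$. Then $\Delta^{2}m-8\sigma^{2}b\geq\Delta^{2}\lp\frac{16\sigma^{2}}{\Delta^{2}}b+c\rp-8\sigma^{2}b=8\sigma^{2}b+\Delta^{2}c$, which is bounded below by $8\sigma^{2}b$ and also by $\tfrac12\Delta^2 m$ (using $m\le \frac{16\sigma^{2}}{\Delta^{2}}b+c+1$ and absorbing the $+1$). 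Hence
\begin{equation}
\frac{8\sigma^{2}mb}{\Delta^{2}m-8\sigma^{2}b}\leq\frac{8\sigma^{2}mb}{8\sigma^{2}b}=m,
\end{equation}
so the first term is at most $m$. Since $b=\tilde\beta_{\mrm{GLR}}\lp T,\delta_{\mrm{F}}\rp$ or $\tilde\beta_{\mrm{GSR}}\lp T,\delta_{\mrm{F}}\rp$ is $\mcal{O}\lp\log\lp T/\delta_{\mrm{F}}\rp\rp=\mcal{O}\lp\log T+\log\lp 1/\delta_{\mrm{F}}\rp\rp$, we get $m=\mcal{O}\lp\log T+\log\lp 1/\delta_{\mrm{F}}\rp+\log\lp 1/\delta_{\mrm{D}}\rp\rp$, and therefore the first term of the maximum obeys the same bound.

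For the second term, $\frac{\delta_{\mrm{F}}^{2/3}}{2^{16/15}\delta_{\mrm{D}}^{4/15}}-m$, the key observation is that under the hypothesis $\delta_{\mrm{F}}\leq\delta_{\mrm{D}}$ we have $\delta_{\mrm{F}}^{2/3}\le\delta_{\mrm{D}}^{2/3}$, so $\frac{\delta_{\mrm{F}}^{2/3}}{\delta_{\mrm{D}}^{4/15}}\leq\delta_{\mrm{D}}^{2/3-4/15}=\delta_{\mrm{D}}^{2/5}\leq 1$; dropping the $-m$ term (which only decreases the quantity) shows the second term is bounded by the constant $2^{-16/15}<1$. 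Taking the ceiling of the maximum of two quantities each $\mcal{O}\lp\log T+\log\lp 1/\delta_{\mrm{F}}\rp+\log\lp 1/\delta_{\mrm{D}}\rp\rp$ preserves that order, so $d=\mcal{O}\lp\log T+\log\lp 1/\delta_{\mrm{F}}\rp+\log\lp 1/\delta_{\mrm{D}}\rp\rp$, and Property~\ref{proper:good_CD} holds for both tests. I do not anticipate a genuine obstacle here — the only point requiring care is correctly lower-bounding the denominator $\Delta^{2}m-8\sigma^{2}b$ away from zero so that the rational first term collapses to something linear in $m$; the choice of coefficient $16$ (rather than the bare feasibility threshold $8$) in the definition of $m$ is exactly what makes that step clean.
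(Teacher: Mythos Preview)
Your proposal is correct and follows essentially the same route as the paper: you lower-bound the denominator $\Delta^{2}m-8\sigma^{2}\beta$ by $8\sigma^{2}\beta$ to collapse the first term to at most $m$, and use $\delta_{\mrm{F}}\le\delta_{\mrm{D}}$ to bound the second term by a constant, exactly as the paper does. The only cosmetic difference is that the paper writes the final bound as $\lce\max\{m,1-m\}\rce=m$ before invoking $\beta(T,\delta_{\mrm F})=\mcal{O}(\log(T/\delta_{\mrm F}))$, whereas you phrase the two pieces separately; the argument is the same.
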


\begin{proof}
With the choice of $m$, when $\delta_{\mrm{F}} \leq \delta_{\mrm{D}}$, we have
\begin{align}
d &= \lce\max\lbp\frac{8\sigma^{2} m\beta\lp T,\delta_{\mrm{F}}\rp}{\Delta^{2}m-8\sigma^{2}\beta\lp T,\delta_{\mrm{F}}\rp}, \frac{\delta_{\mrm{F}}^{2/3}}{2^{16/15}\delta_{\mrm{D}}^{4/15}} - m\rbp\rce \nonumber\\
&\leq \lce\max\lbp\frac{8\sigma^{2} m\beta\lp T,\delta_{\mrm{F}}\rp}{\Delta^{2} \lce \frac{16\sigma^{2}}{\Delta^{2}}\beta \lp T,\delta_{\mrm{F}}\rp \rce - 8\sigma^{2}\beta\lp T,\delta_{\mrm{F}}\rp}, 1 - m\rbp\rce \nonumber\\
&\leq \lce \max\lbp m, 1 - m \rbp \rce \nonumber \\
&\overset{(a)}{=} \lce \frac{16\sigma^{2}}{\Delta^{2}}\beta \lp T,\delta_{\mrm{F}}\rp + \log \lp 1/\delta_{\mrm{D}} \rp \rce \nonumber \\
&\overset{(b)}{=} \mcal{O} \lp \log T + \log \lp 1/\delta_{\mrm{F}} \rp + \log \lp 1/\delta_{\mrm{D}} \rp \rp
\label{eq:d_upp_glr}
\end{align}
where step $(a)$ follows from the choice of $m$, and (b) is due to the fact that $\beta \lp T, \delta_{\mrm{F}} \rp$ is $\mcal{O} \lp \log T + \log \lp 1/\delta_{\mrm{F}} \rp \rp$.
\end{proof}

As a result, the GLR test in \eqref{def:GLR-unknown-pre-post} and the GSR test in \eqref{def:GSR-unknown-pre-post} both satisfy Property \ref{proper:good_CD}, meaning that they are good change detectors for PS bandits.

\section{Experimental Results}
\label{sec:sim}

In this section, we study the performance of our proposed change detectors through simulations. We compare the latencies of the proposed GLR change detectors with that of the TVT-CuSum test \cite{huang2024high} to illustrate the cost of detecting changes without knowing the (pre- and) post-change distribution(s). To investigate the tightness of the upper bounds in our theorems, we also compare the empirical results for our change detectors with the upper bounds in Theorem \ref{thm:unknown-post} and \ref{thm:pre-post-unknown} under various choices of $T$, $\delta_{\mrm{F}}$, and $\delta_{\mrm{D}}$.

In our experiments, the pre- and post-change distributions follow $\mcal{N}\lp 0, 1 \rp$ and $\mcal{N}\lp 1, 1 \rp$, respectively. When the pre- and post-change distributions are both unknown, we set the pre-change window $m = T - 1000$, since this choice of $m$ is greater than $\lce \frac{16\sigma^{2}}{\Delta^{2}}\beta \lp T,\delta_{\mrm{F}}\rp + \log \lp 1/\delta_{\mrm{D}} \rp \rce$ in Corollary \ref{cor:m.d}, ensuring that the latency of the GLR test in \eqref{def:GLR-unknown-pre-post} satisfies Property \ref{proper:good_CD}. According to the definition of the latency in \eqref{eq:latency}, for any arbitrary change-point $\nu \in \lbp m+1, \dots, T - d \rbp$, there are approximately $100\delta_{\mrm{D}}\%$ of the simulated trials in which the detection delay exceeds the latency. Therefore, to obtain the empirical value of the latency, we first conducted $200000$ trials and recorded the detection delay in each trial for each change-point in a set $\mcal{N}$, and then took the maximum of the $100 \lp 1 - \delta_{\mrm{D}} \rp^{\mrm{th}}$ percentile of the recorded detection delays over all change-points in $\mcal{N}$. The collection of change-points $\mcal{N}$ is set to $\lbp m+1+nT/10 :\; n \in \mbb{N},\; m+1+nT/10 \leq T \rbp$, as conducting $200000$ trials over all possible change-points is cumbersome. 

To implement the GLR tests in \eqref{def:GLR-unknown-post} and \eqref{def:GLR-unknown-pre-post}, we need to compute the GLR statistics in \eqref{eq:GLR_statistics-unknown-post} and \eqref{eq:GLR_statistics_pre-post-unknown}. However, unlike the CuSum statistic in \eqref{eq:CuSum_statistic}, the GLR statistics in \eqref{eq:GLR_statistics-unknown-post} and \eqref{eq:GLR_statistics_pre-post-unknown} do not have the recursive property to simplify the computation. As a result, we perform down-sampling by taking supremum over $\mcal{K}_{n} \coloneqq \lbp n - 700, \dots, n \rbp$ when computing the GLR statistics in \eqref{eq:GLR_statistics-unknown-post} and \eqref{eq:GLR_statistics_pre-post-unknown}, i.e., for $n \in \lbp 1, \dots, T \rbp$,
\begin{align}
    &G'_{n}\coloneqq \sup_{k \in \mcal{K}_{n}}  \log \frac{\sup_{\mu\in\mbb{R}}\prod_{i=k}^{n}f_{\mu}(X_i)}{\prod_{i=k}^{n}f_{\mu_{0}}(X_i)}, \\
    &\tilde{G}'_{n} \coloneqq \nonumber \\
    &\sup_{k \in \mcal{K}_{n}} \log \frac{\sup_{\mu_{0}' \in \mbb{R}} \sup_{\mu_{1}' \in \mbb{R}} \prod_{i = 1}^{k} f_{\mu_{0}'}\lp X_{i} \rp \prod_{i = k + 1}^{n} f_{\mu_{1}'}\lp X_{i} \rp}{\sup_{\mu\in\mbb{R}} \prod_{i = 1}^{n} f_{\mu} \lp X_{i} \rp} .\label{eq:GLR_stat_pre-post-unknown-down}
\end{align}
Then, the stopping time of the implemented GLR tests can be defined as follows:
\begin{align}
    \tau'_{\mrm{GLR}}&\coloneqq\inf\lbp n\in\mbb{N}:\; G'_{n} \geq\beta_{\mrm{GLR}}\lp n,\delta_{\mathrm{F}}\rp\rbp \label{def:GLR_unknown-post-down},
    \\
    \tilde{\tau}'_{\mrm{GLR}}&\coloneqq\inf\lbp n\in\mbb{N}:\; \tilde{G}'_{n} \geq\tilde{\beta}_{\mrm{GLR}}\lp n,\delta_{\mathrm{F}}\rp\rbp \label{def:GLR_pre-post-unknown-down}
\end{align}
where $\beta_{\mrm{GLR}}$ is defined in \eqref{def:GLR-unknown-post-thres} and $\tilde{\beta}_{\mrm{GLR}}$ is defined in \eqref{def:GLR-unknown-pre-post-thres}. However, for the GSR statistics in \eqref{def:GSR-unknown-post} and \eqref{def:GSR-unknown-pre-post}, such down-sampling cannot be implemented. Consequently, we only perform the simulations for the GLR tests in \eqref{def:GLR_unknown-post-down} and \eqref{def:GLR_pre-post-unknown-down}. The empirical results for these tests are given in Figures~\ref{fig:latency_T} and \ref{fig:latency_delta}.
\begin{figure}
    \centering
    \includegraphics[width=0.9\linewidth]{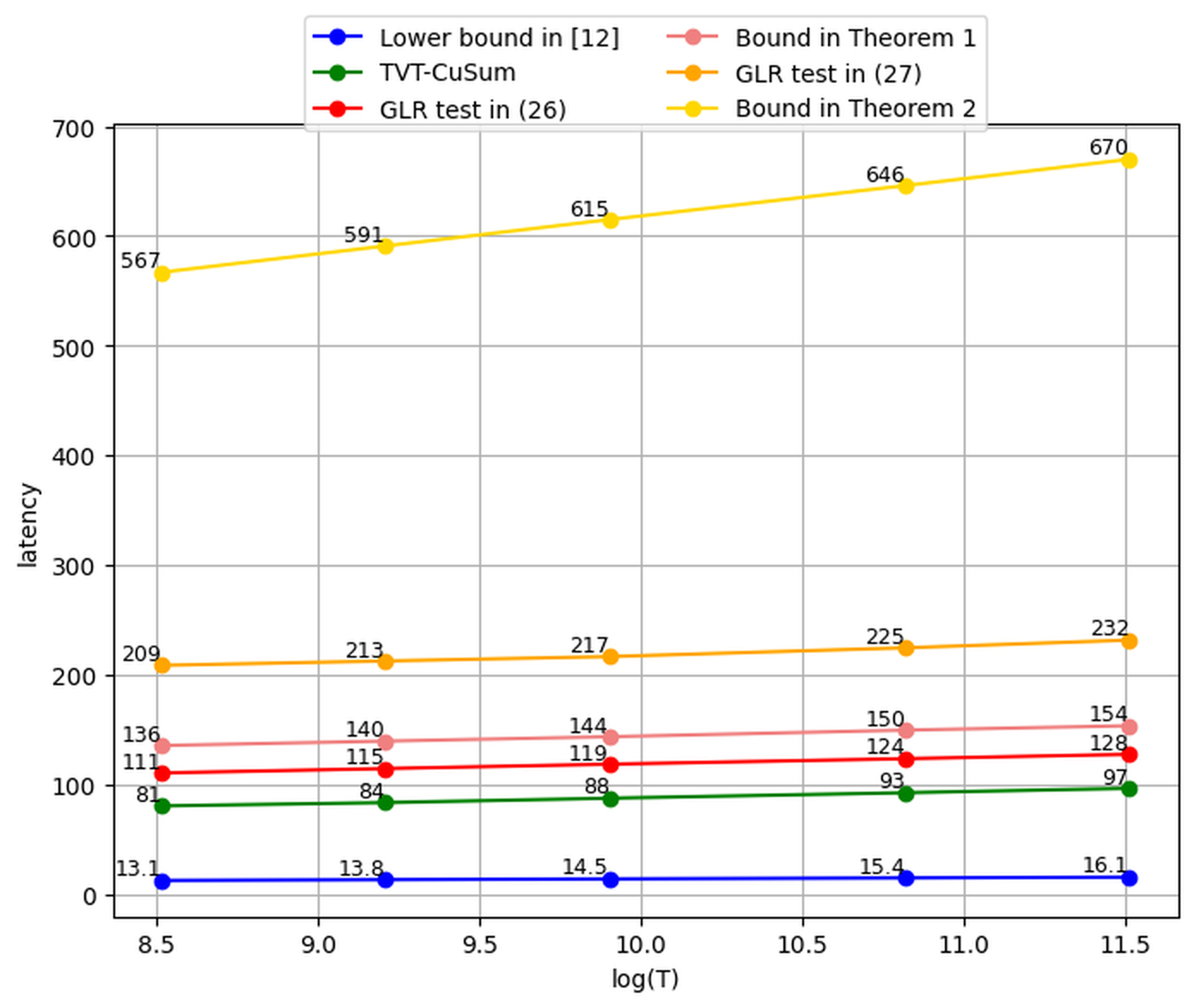}
    \caption{Latencies of $\tau'_{\mrm{GLR}}$ and $\tilde{\tau}'_{\mrm{GLR}}$ with fixed $\delta_{\mrm{F}} = \delta_{\mrm{D}} = 0.01$ and varying $T \in \lbp 5000, 10000, 20000, 50000, 100000 \rbp$.}
    \label{fig:latency_T}
\end{figure}
\begin{figure}
    \centering
    \includegraphics[width=0.9\linewidth]{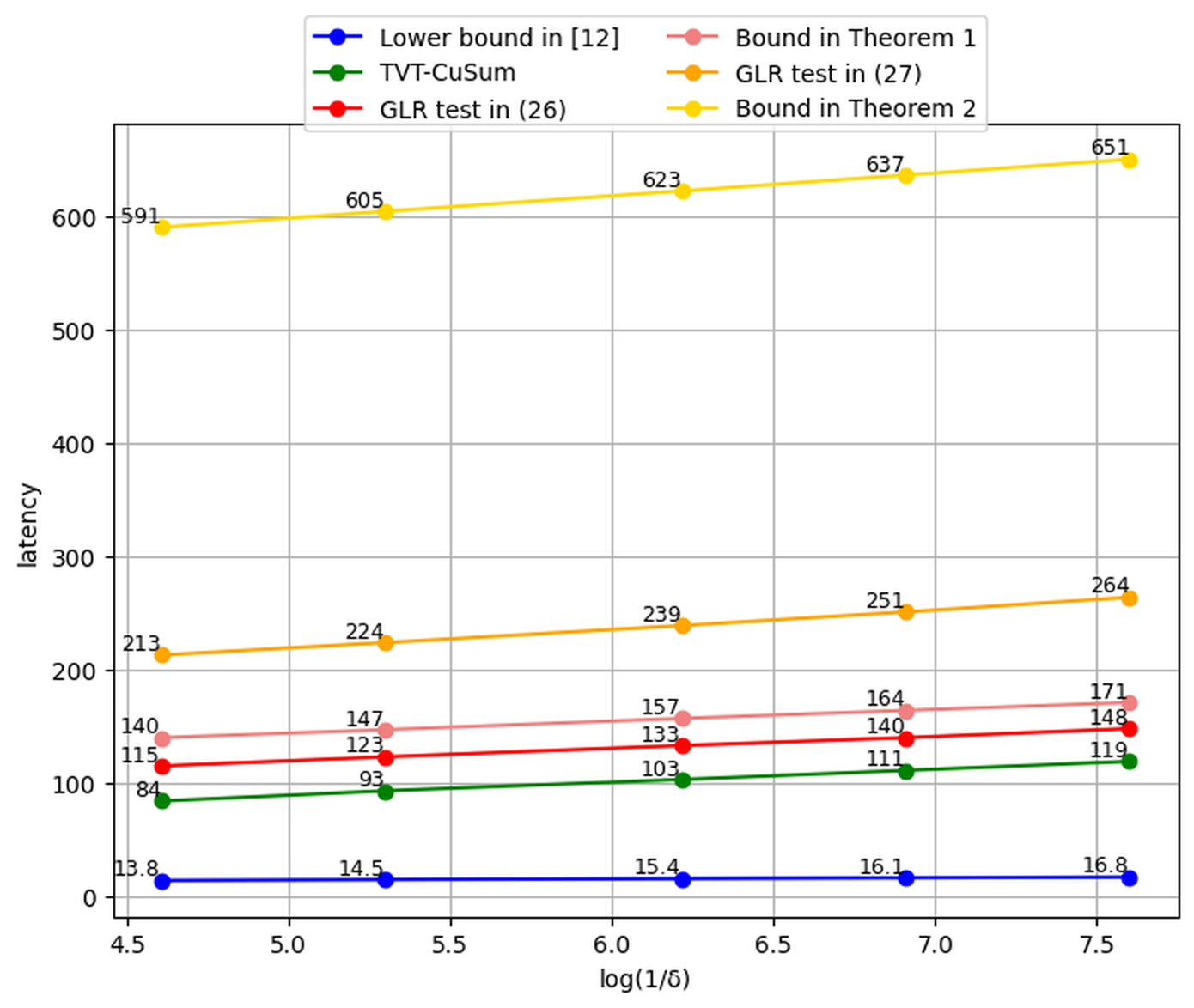}
    \caption{Latencies of $\tau'_{\mrm{GLR}}$ and $\tilde{\tau}'_{\mrm{GLR}}$ with fixed $T = 10000$ and varying $\delta \in \lbp 5000, 10000, 20000, 50000, 100000 \rbp$ where $\delta_{\mrm{F}} = \delta_{\mrm{D}} = \delta$.}
    \label{fig:latency_delta}
\end{figure}

As shown in the figures, the latencies of the GLR test in \eqref{def:GLR_unknown-post-down} and \eqref{def:GLR_pre-post-unknown-down} grow logarithmically with $T$ and $1/\delta$, aligning with the growth of the upper bounds in Theorem \ref{thm:unknown-post} and \ref{thm:pre-post-unknown}. Nevertheless, the upper bound in Theorem \ref{thm:pre-post-unknown} is loose compared to that in Theorem \ref{thm:unknown-post}, as the gap between the simulated value and the upper bound is wider when the pre- and post-change distributions are unknown. 
We can also see that the gap between the latency of the TVT-CuSum test and that of the GLR test in \eqref{def:GLR_unknown-post-down} is narrower than the gap between the latency of the GLR test in \eqref{def:GLR_unknown-post-down} and that of the GLR test in \eqref{def:GLR_pre-post-unknown-down}. 


\section{Conclusions}
\label{sec:Sum}

We investigated a variant of the QCD problem that is designed for the PS bandits. Since the reward distributions are unknown in bandit problems, we considered scenarios where the information about the post-change distribution is unavailable first, and then generalized the results to the cases where both pre- and post-change distributions are unknown. We proposed the GLR and GSR tests for each setting as candidate solutions. We show that the growth of the latency of the GLR and GSR tests with $T$ is order-optimal in both cases. In addition, the latency of the GLR and GSR tests grows logarithmically with $1/\delta_{\mrm{F}}$ and $1/\delta_{\mrm{D}}$, which is a desirable property in the regret analysis for PS bandits.
Our simulation results validate these theoretical results. However, the simulation results suggest that the upper bound on the latency in Theorem \ref{thm:pre-post-unknown} is loose, and it may be worthwhile exploring if this bound can be tightened. 


\section*{Acknowledgement}
This work was supported by the National Science Foundation under grant ECCS-2033900, and by the Army Research Laboratory under Cooperative Agreement W911NF-17-2-0196, through the University of Illinois at Urbana-Champaign.
\vfill

{\footnotesize\bibliography{ref.bib}}

\begin{thebibliography}{10}
\providecommand{\url}[1]{#1}
\csname url@samestyle\endcsname
\providecommand{\newblock}{\relax}
\providecommand{\bibinfo}[2]{#2}
\providecommand{\BIBentrySTDinterwordspacing}{\spaceskip=0pt\relax}
\providecommand{\BIBentryALTinterwordstretchfactor}{4}
\providecommand{\BIBentryALTinterwordspacing}{\spaceskip=\fontdimen2\font plus
\BIBentryALTinterwordstretchfactor\fontdimen3\font minus \fontdimen4\font\relax}
\providecommand{\BIBforeignlanguage}[2]{{%
\expandafter\ifx\csname l@#1\endcsname\relax
\typeout{** WARNING: IEEEtran.bst: No hyphenation pattern has been}%
\typeout{** loaded for the language `#1'. Using the pattern for}%
\typeout{** the default language instead.}%
\else
\language=\csname l@#1\endcsname
\fi
#2}}
\providecommand{\BIBdecl}{\relax}
\BIBdecl

\bibitem{poor-hadj-qcd-book-2009}
H.~V. Poor and O.~Hadjiliadis, \emph{Quickest detection}.\hskip 1em plus 0.5em minus 0.4em\relax Cambridge University Press, 2009.

\bibitem{tart-niki-bass-2014}
A.~G. Tartakovsky, I.~V. Nikiforov, and M.~Basseville, \emph{Sequential Analysis: {Hypothesis} Testing and Change-Point Detection}, ser. Statistics.\hskip 1em plus 0.5em minus 0.4em\relax CRC Press, 2014.

\bibitem{vvv_qcd_overview}
V.~V. Veeravalli and T.~Banerjee, ``Quickest change detection,'' in \emph{Academic press library in signal processing: Array and statistical signal processing}.\hskip 1em plus 0.5em minus 0.4em\relax Cambridge, MA: Academic Press, 2013.

\bibitem{xie_vvv_qcd_overview}
L.~Xie, S.~Zou, Y.~Xie, and V.~V. Veeravalli, ``Sequential (quickest) change detection: Classical results and new directions,'' \emph{IEEE Journal on Selected Areas in Information Theory}, vol.~2, no.~2, pp. 494--514, 2021.

\bibitem{liu2018change}
F.~Liu, J.~Lee, and N.~Shroff, ``A change-detection based framework for piecewise-stationary multi-armed bandit problem,'' in \emph{Proceedings of the AAAI Conference on Artificial Intelligence}, vol.~32, no.~1, 2018.

\bibitem{cao2019nearly}
Y.~Cao, W.~Zheng, B.~Kveton, and Y.~Xie, ``Nearly optimal adaptive procedure for piecewise-stationary bandit: a change-point detection approach,'' \emph{AISTATS, Okinawa, Japan}, 2019.

\bibitem{besson2022efficient}
L.~Besson, E.~Kaufmann, O.-A. Maillard, and J.~Seznec, ``Efficient change-point detection for tackling piecewise-stationary bandits,'' \emph{The Journal of Machine Learning Research}, vol.~23, no.~1, pp. 3337--3376, 2022.

\bibitem{dahlin2023controlling}
N.~Dahlin, S.~Bose, and V.~V. Veeravalli, ``Controlling a {M}arkov decision process with an abrupt change in the transition kernel,'' in \emph{2023 American Control Conference (ACC)}.\hskip 1em plus 0.5em minus 0.4em\relax IEEE, 2023, pp. 3401--3408.

\bibitem{wang2021near}
L.~Wang, H.~Zhou, B.~Li, L.~R. Varshney, and Z.~Zhao, ``Near-optimal algorithms for piecewise-stationary cascading bandits,'' in \emph{ICASSP 2021-2021 IEEE International Conference on Acoustics, Speech and Signal Processing (ICASSP)}.\hskip 1em plus 0.5em minus 0.4em\relax IEEE, 2021, pp. 3365--3369.

\bibitem{zhou2020near}
H.~Zhou, L.~Wang, L.~Varshney, and E.-P. Lim, ``A near-optimal change-detection based algorithm for piecewise-stationary combinatorial semi-bandits,'' in \emph{Proceedings of the AAAI Conference on Artificial Intelligence}, vol.~34, no.~04, 2020, pp. 6933--6940.

\bibitem{zhou2020nonstationary}
H.~Zhou, J.~Chen, L.~R. Varshney, and A.~Jagmohan, ``Nonstationary reinforcement learning with linear function approximation,'' \emph{arXiv preprint arXiv:2010.04244}, 2020.

\bibitem{huang2024high}
Y.-H. Huang and V.~V. Veeravalli, ``High probability latency quickest change detection over a finite horizon,'' in \emph{2024 IEEE International Symposium on Information Theory (ISIT)}.\hskip 1em plus 0.5em minus 0.4em\relax IEEE, 2024, pp. 1047--1052.

\bibitem{lai2010sequential}
T.~L. Lai and H.~Xing, ``Sequential change-point detection when the pre-and post-change parameters are unknown,'' \emph{Sequential analysis}, vol.~29, no.~2, pp. 162--175, 2010.

\bibitem{huang2025change}
Y.-H. Huang, A.~Gerogiannis, S.~Bose, and V.~V. Veeravalli, ``Change detection-based procedures for piecewise stationary mabs: A modular approach,'' \emph{arXiv preprint arXiv:2501.01291}, 2025.

\bibitem{lai1998information}
T.~L. Lai, ``Information bounds and quick detection of parameter changes in stochastic systems,'' \emph{IEEE Transactions on Information theory}, vol.~44, no.~7, pp. 2917--2929, 1998.

\bibitem{kaufmann2021mixture}
E.~Kaufmann and W.~M. Koolen, ``Mixture martingales revisited with applications to sequential tests and confidence intervals,'' \emph{The Journal of Machine Learning Research}, vol.~22, no.~1, pp. 11\,140--11\,183, 2021.

\bibitem{shiryaev2007optimal}
A.~N. Shiryaev, \emph{Optimal stopping rules}.\hskip 1em plus 0.5em minus 0.4em\relax Springer Science \& Business Media, 2007, vol.~8.

\bibitem{ville1939etude}
J.~Ville, \emph{Etude critique de la notion de collectif}.\hskip 1em plus 0.5em minus 0.4em\relax Gauthier-Villars Paris, 1939.

\end{thebibliography}

\onecolumn

\appendices

\section{Proof of Theorem \ref{thm:unknown-post}} \label{sec:thm1}

There are two change detectors to consider: the GLR and GSR tests. For each test, there are two parts to prove: the false alarm probability $\Pr_{\infty} \lp \tau \leq T \rp$ and the late detection probability $\Pr_{\nu} \lp \tau \geq \nu + d \rp$. To prove these probability upper bounds, we first express the statistic using the empirical mean of the observations, so that we can exploit the sub-Gaussianity of the samples and apply concentration inequalities in manner similar to the approach in \cite{besson2022efficient, kaufmann2021mixture}.

We first prove the result for the GLR test in \eqref{def:GLR-unknown-post}: Let $\hat{\mu}_{m:n}$ be the empirical mean of the stochastic observations $\lbp X_{m}, \dots, X_{n} \rbp$. Recall that $f_{\mu}$ is the Gaussian density with mean $\mu$ and variance $\sigma^{2}$. To represent the GLR statistic in \eqref{eq:GLR_statistics-unknown-post} in terms of the empirical mean of the observations, we use the following lemma:

\begin{lemma}\label{lem:GLR-kl-unknown-post}
For any $\mu \in \mbb{R}$ and $k, n \in \lbp 1, \dots, T \rbp$ where $k \leq n$,
\begin{align}
    \log \lp \frac{\sup_{\mu' \in \mbb{R}}\prod_{i = k}^{n} f_{\mu'} \lp X_{i} \rp}{\prod_{i = k}^{n} f_{\mu} \lp X_{i} \rp} \rp = \lp n - k + 1 \rp \mathrm{kl}\lp\hat{\mu}_{k:n};\mu\rp.\label{eq:GLR-kl}
\end{align}
where $\mathrm{kl} \lp x; y \rp \coloneqq \frac{\lp x - y \rp^{2}}{2 \sigma^{2}}$ is the KL divergence between two Gaussian distributions with common variance $\sigma^{2}$ and different mean $x$ and $y$, respectively.
\end{lemma}

\begin{proof}

For any $\mu \in \mbb{R}$ and $k, n \in \lbp 1, \dots, T \rbp$ where $k \leq n$, we can show that
\begin{align}
\begin{aligned}
    &\log \lp \frac{\sup_{\mu' \in \mbb{R}}\prod_{i = k}^{n} f_{\mu'} \lp X_{i} \rp}{\prod_{i = k}^{n} f_{\mu} \lp X_{i} \rp} \rp \\
    &=\log \lp \frac{\frac{1}{\lp 2 \pi \sigma^{2} \rp^{n/2}} \exp \lp - \frac{1}{2\sigma^{2}} \inf_{\mu' \in \mbb{R}} \sum_{i = k}^{n} \lp X_{i} - \mu' \rp^{2} \rp}{\frac{1}{\lp 2 \pi \sigma^{2} \rp^{n/2}} \exp \lp - \frac{1}{2\sigma^{2}} \sum_{i = k}^{n} \lp X_{i} - \mu' \rp^{2} \rp} \rp \\
    &\overset{(a)}{=} \log \lp \frac{\frac{1}{\lp 2 \pi \sigma^{2} \rp^{n/2}} \exp \lp - \frac{1}{2 \sigma^{2}} \sum_{i = k}^{n} \lp X_{i} - \hat{\mu}_{k : n} \rp^{2} \rp}{\frac{1}{\lp 2 \pi \sigma^{2} \rp^{n/2}} \exp \lp - \frac{1}{2\sigma^{2}} \sum_{i = k}^{n} \lp X_{i} - \mu \rp^{2} \rp} \rp \\
    &= \frac{1}{2\sigma^{2}} \sum_{i = k}^{n} \lb \lp X_{i} - \mu \rp^{2} - \lp X_{i} - \hat{\mu}_{k:n} \rp^{2} \rb \\
    &= \frac{1}{2\sigma^{2}} \lb \lp n - k + 1 \rp \mu^{2} - 2 \mu \sum_{i = k}^{n} X_{i} - \lp n - k + 1 \rp \hat{\mu}^{2}_{k:n} + 2 \hat{\mu}_{k:n} \sum_{i = k}^{n} X_{i}  \rb \\
    &= \frac{n - k + 1}{2\sigma^{2}} \lp \mu - \hat{\mu}_{k:n} \rp^{2} \\
    &= \lp n - k + 1 \rp \mrm{kl} \lp \hat{\mu}_{k:n}; \mu \rp
\end{aligned}
\end{align}
where step $(a)$ is owing to the fact that $\sum_{i = k}^{n} \lp X_{i} - \mu' \rp^{2}$ is minimized when $\mu' = \hat{\mu}_{k:n}$.
\end{proof}

For upper bounding the false alarm probability of the GLR test in \eqref{def:GLR-unknown-post}, we use the following lemma as our concentration inequality:

\begin{lemma} \label{lem:mix_martin}
Let $\lp X_{n}\rp_{n=1}^{\infty}$ be an i.i.d. $\sigma^{2}$-sub-Gaussian sequence with mean $\mu$, then for all $\delta\in\lp0,1\rp$,
\textup{\begin{align}
    &\Pr_{\infty}\lp\exists\,n\in\mbb{N}:n\mrm{kl}\lp\hat{\mu}_{1:n},\mu\rp-3\log\lp1+\log\lp n\rp\rp>\frac{5}{4}\log\lp\frac{1}{\delta}\rp+\frac{11}{2}\rp\leq\delta.\label{eq:mix_mart}
\end{align}}
\end{lemma}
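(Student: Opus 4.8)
The plan is to read \eqref{eq:mix_mart} as a time-uniform deviation bound for the centered, $\sigma^2$-sub-Gaussian random walk $M_n \coloneqq \sum_{i=1}^{n}\lp X_i - \mu \rp$, since $n\,\mrm{kl}\lp\hat\mu_{1:n},\mu\rp = M_n^2/\lp 2\sigma^2 n\rp$ by the computation in Lemma~\ref{lem:GLR-kl-unknown-post}. The core tool is the method of mixtures: for every $\lambda\in\mbb{R}$ the process $n\mapsto\exp\lp\lambda M_n - \tfrac12\lambda^2\sigma^2 n\rp$ is a nonnegative supermartingale started at $1$ (this is precisely sub-Gaussianity of the increments), and integrating it against a centered Gaussian prior on $\lambda$ of variance $\rho$ yields, in closed form, a mixture supermartingale whose logarithm is $\tfrac{M_n^2}{2\lp\sigma^2 n + \rho^{-1}\rp} - \tfrac12\log\lp 1 + \rho\sigma^2 n\rp$. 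Applying Ville's maximal inequality to this nonnegative, unit-initialized supermartingale gives, for a single scale $\rho$, $\Pr_\infty\lp\exists n:\ M_n^2/\lp2\sigma^2 n\rp \geq \lp 1+\tfrac{1}{\rho\sigma^2 n}\rp\lp\log\lp1/\delta\rp + \tfrac12\log\lp 1+\rho\sigma^2 n\rp\rp\rp\leq\delta$.

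A single $\rho$ only controls one scale of $n$ and leaves a $\log n$ penalty rather than the iterated logarithm $3\log\lp 1+\log n\rp$ in \eqref{eq:mix_mart}. To obtain the $\log\log$ rate I would peel over time: partition $\mbb{N}$ into geometric epochs $I_k=\lbp n:\ \eta^{k-1}\leq n<\eta^{k}\rbp$ for a ratio $\eta>1$, tune the mixing variance $\rho_k$ to the scale $\eta^{k}$ on epoch $I_k$, split the error budget as $\delta_k = \delta/\lp\zeta\lp s\rp k^{s}\rp$ for some exponent $s>1$ (or $\delta_k=\delta/\lp k(k+1)\rp$), and union-bound over $k$. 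Since the epoch index of a time $n$ is $\approx\log_\eta n$, the per-epoch term $\log\lp 1/\delta_k\rp$ contributes $\log\lp 1/\delta\rp + \Theta\lp\log\log n\rp$; after optimizing $\eta$, the exponent $s$, and the $\rho_k$, and absorbing lower-order terms, this yields exactly the stated coefficients $3$, $\tfrac54$, and the additive constant $\tfrac{11}{2}$. Equivalently, one may simply invoke the sub-Gaussian specialization of the time-uniform mixture-martingale concentration inequality of Kaufmann and Koolen~\cite{kaufmann2021mixture} (as done in~\cite{besson2022efficient}) and verify that their constants reduce to these.

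\textbf{Main obstacle.} The qualitative shape of the bound --- $\log\log n$ growth in time, $\log\lp1/\delta\rp$ growth in the confidence parameter --- is standard, so the real work is entirely quantitative: extracting the explicit constants $3$, $5/4$, $11/2$ requires carefully balancing the epoch ratio $\eta$, the mixing variances $\rho_k$, and the weight sequence $\lp\delta_k\rp$, and then handling the small-$n$ boundary regime (where $\log\lp 1+\log n\rp$ is close to $0$, e.g.\ $n=1$) so that the inequality holds uniformly over all of $\mbb{N}$ rather than only for $n$ beyond some threshold. I expect the accounting of these lower-order terms to be the only genuinely delicate part; there is no conceptual obstruction beyond that optimization.
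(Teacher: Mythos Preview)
Your proposal is correct and aligns with the paper's approach: the paper does exactly your ``equivalently'' route, invoking Lemma~13 of Kaufmann--Koolen~\cite{kaufmann2021mixture}, which already packages the mixture-plus-stitching construction you describe into a single nonnegative martingale $Z(n)$ with $Z(0)=1$ satisfying $\lbp Y(n)-c\geq x\rbp\subseteq\lbp Z(n)\geq e^{4x/5}\rbp$ where $c=\tfrac{5}{4}\log\lp\tfrac{\pi^{2}/3}{(\log(5/4))^{2}}\rp\leq\tfrac{11}{2}$, and then applies Ville's inequality once to obtain the stated constants. The only minor presentational difference is that the paper applies Ville to the single stitched supermartingale rather than to each epoch separately with a subsequent union bound, so your anticipated ``delicate'' constant-tracking is entirely offloaded to the cited lemma.
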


\begin{proof} [Proof of Lemma \ref{lem:mix_martin}]
Let $Y\lp n\rp\coloneqq n \mrm{kl}\lp\hat{\mu}_{1:n},\mu\rp-3\log\lp1+\log\lp n\rp\rp$. To prove this lemma, we associate the random process $\{Y \lp n \rp\}$ with a supermartingale, so that we can apply Ville's inequality. To this end, we use the following lemma derived in \cite{kaufmann2021mixture} to construct the associated supermartingale. 
\begin{lemma}[Lemma 13 from \cite{kaufmann2021mixture}]
Let $c\coloneqq\frac{5}{4}\log\lp\frac{\pi^{2}/3}{\lp\log\lp5/4\rp\rp^{2}}\rp$. For any $x > 0$, there exists a nonnegative (mixture) martingale $Z\lp t\rp$ such that $Z\lp 0 \rp=1$ and for any $x>0$ and $n\in\mbb{N}$,
\begin{equation}
    \lbp Y\lp n\rp-c\geq x\rbp\subseteq\lbp Z\lp n\rp\geq e^{\frac{4x}{5}}\rbp\label{eq:mix_martin_include-1}.
\end{equation}
\end{lemma}
Continuing with the proof of Lemma~\ref{lem:mix_martin}, for any $\lambda>0$ and $z>1$, we have:
\begin{align}\nonumber
    \lbp e^{\lambda\lb Y\lp n\rp-11/2\rb}\geq z\rbp&\overset{(a)}{\subseteq}\lbp e^{\lambda\lb Y\lp n\rp-c\rb}\geq z\rbp  \\\nonumber
    &=\lbp Y\lp n\rp - c\geq \frac{\log\lp z\rp}{\lambda}\rbp\\\nonumber
    &\overset{(b)}{\subseteq}\lbp Z\lp n\rp\geq e^{\frac{4\log\lp z\rp}{5\lambda}}\rbp\\\nonumber
    &=\lbp Z\lp n\rp\geq z^{\frac{4}{5\lambda}}\rbp\\
    &=\lbp\lp Z\lp n\rp\rp^{5\lambda/4}\geq z\rbp\label{eq:mix_martin_include-2}
\end{align}
where step $(a)$ is owing to the fact that $\frac{11}{2} \geq c$, and step $(b)$ is due to \eqref{eq:mix_martin_include-1}. When $\lambda\leq\frac{4}{5}$, because $g \lp x \rp = x^{5\lambda/4}$ is a concave function, $\lp Z\lp n\rp\rp^{5\lambda/4}$ is a supermartingale.
Hence, for any $\lambda\in\lp0,\frac{4}{5}\rb$, and $y > 11/2$, we have the following inequality:
\begin{align}\nonumber
    \Pr_{\infty}\lp\exists\,n\in\mbb{N}:Y\lp n\rp>y\rp&=\Pr_{\infty}\lp\cup_{n=1}^{\infty}\lbp Y\lp n\rp>y\rbp\rp\\\nonumber
    &=\Pr_{\infty}\lp\cup_{n=1}^{\infty}\lbp e^{\lambda Y\lp n\rp}>e^{\lambda y}\rbp\rp\\\nonumber
    &=\Pr_{\infty}\lp\cup_{n=1}^{\infty}\lbp e^{\lambda\lb Y\lp n\rp - 11/2\rb}>e^{\lambda\lp y - 11/2 \rp}\rbp\rp\\\nonumber
    &\overset{(a)}{\leq}\Pr_{\infty}\lp\cup_{n=1}^{\infty}\lbp\lp Z\lp n\rp\rp^{5\lambda/4}>e^{\lambda \lp y - 11/2 \rp}\rbp\rp\\\nonumber
    &=\Pr_{\infty}\lp\exists\,n\in\mbb{N}:\lp Z\lp n\rp\rp^{5\lambda/4}>e^{\lambda\lp y-11/2\rp}\rp\\\nonumber
    &=\Pr_{\infty}\lp \sup_{n \in \mbb{N}} \lp Z\lp n\rp\rp^{5\lambda/4}>e^{\lambda\lp y-11/2\rp}\rp\\
    &\overset{(b)}{\leq}e^{-\lambda\lp y-11/2 \rp}\label{eq:Ville}
\end{align}
where step $(a)$ is due to \eqref{eq:mix_martin_include-2} and step $(b)$ comes from Ville's inequality \cite{ville1939etude}. By minimizing \eqref{eq:Ville} over $\lambda\in\lp0,\frac{4}{5}\rb$, we obtain
\begin{align}
    \Pr_{\infty}\lp\exists\,n\in\mbb{N}:Y\lp n\rp>y\rp&\leq e^{-\frac{4}{5}\lp y-11/2 \rp}\label{eq:Minimize}.
\end{align}
Then, by letting $\delta=e^{-\frac{4}{5}\lp y-11/2 \rp}$, we can see that for any $\delta\in\lp0,1\rp$,
\begin{align}
    \Pr_{\infty}\lp\exists\,n\in\mbb{N}:Y\lp n\rp>\frac{5}{4}\log\lp\frac{1}{\delta}\rp + \frac{11}{2} \rp&\leq\delta\label{eq:lemma-end}.
\end{align}
\end{proof}

Continuing with the proof of the upper bound on the false alarm probability, for any $T\in\mbb{N}$, we can upper bound the false alarm probability as follows:
\begin{align}
    &\Pr_{\infty} \lp \tau_{\mrm{GLR}} \leq T \rp \nonumber\\
    &\leq \Pr_{\infty} \lp \tau_{\mrm{GLR}} < \infty \rp \nonumber\\
    &=\Pr_{\infty}\lp\exists\,n\in\mbb{N}: G_{n} > \beta_{\mrm{GLR}} \lp n,\delta_{\mathrm{F}}\rp\rp \nonumber\\
    &\overset{(a)}{=}\Pr_{\infty}\lp\exists\, n \in\mbb{N}:\sup_{1 \leq k \leq n} \lp n - k + 1 \rp \mathrm{kl} \lp \hat{\mu}_{k:n}; \mu_{0} \rp > \beta_{\mrm{GLR}} \lp n, \delta_{\mathrm{F}} \rp \rp \nonumber\\
    &=\Pr_{\infty}\bigg(\exists\,k,r\in\mbb{N}: r\mathrm{kl}\lp\hat{\mu}_{k:k+r-1};\mu_{0}\rp>3\log\lp1+\log\lp k+r-1\rp\rp \nonumber\\
    &\quad\quad\quad+\frac{5}{4}\log\lp\frac{3\lp k+r-1\rp^{3/2}}{\delta_{\mathrm{F}}}\rp+\frac{11}{2}\bigg) \nonumber\\
    &\overset{(b)}{\leq}\sum_{k=1}^{\infty}\Pr_{\infty}\bigg(\exists\,r\in\mbb{N}: r\mathrm{kl}\lp\hat{\mu}_{k:k+r-1};\mu_{0}\rp>3\log\lp1+\log\lp k+r-1\rp\rp \nonumber\\
    &\quad\quad\quad\quad\quad+\frac{5}{4}\log\lp\frac{3\lp k+r-1\rp^{3/2}}{\delta_{\mathrm{F}}}\rp+\frac{11}{2}\bigg) \nonumber\\
    &\overset{(c)}{\leq}\sum_{k=1}^{\infty}\Pr_{\infty}\bigg(\exists\,r\in\mbb{N}: r\mathrm{kl}\lp\hat{\mu}_{k:k+r-1};\mu_{0}\rp-3\log\lp1+\log\lp r\rp\rp \nonumber\\
    &\quad\quad\quad\quad\quad>\frac{5}{4}\log\lp\frac{3k^{3/2}}{\delta_{\mathrm{F}}}\rp+\frac{11}{2}\bigg) \nonumber\\
    &\overset{(d)}{\leq}\sum_{k=1}^{\infty}\frac{\delta_{\mathrm{F}}}{3k^{3/2}} \nonumber \\
    &\leq\delta_{\mathrm{F}}\label{eq:FA_GLR_post_unknown}.
\end{align}
In step $(a)$, we apply Lemma \ref{lem:GLR-kl-unknown-post} to represent $G_{n}$ with the empricial mean of the observations. Step $(b)$ is due to the union bound, and step $(c)$ results from the fact that $\log$ is an increasing function. In step $(d)$, we exploit the concentration inequality in Lemma \ref{lem:mix_martin}.

Next, we prove the detection delay performance of the GLR test in \eqref{def:GLR-unknown-post}: Recall the definition of $d$ in \eqref{eq:latency-unknown-post} and $\Delta = \lba \mu_{0} - \mu_{1} \rba$. For any $\nu \in \lbp 1, \dots, T - d \rbp$, we have
\begin{align}
    &\Pr_{\nu} \lp \tau_{\mrm{GLR}} \geq \nu + d \rp \nonumber\\
    &=\Pr_{\nu}\lp \forall\,n \in \lbp 1, \dots, \nu + d - 1 \rbp: G_{n} \leq \beta_{\mrm{GLR}} \lp n,\delta_{\mathrm{F}}\rp\rp \nonumber\\
    &\overset{(a)}{=} \Pr_{\nu}\lp \forall\,n \in \lbp 1, \dots, \nu + d - 1 \rbp: \sup_{1 \leq k \leq n}\lp n - k + 1 \rp \mathrm{kl} \lp \hat{\mu}_{k:n};\mu_{0}\rp \leq \beta_{\mrm{GLR}} \lp n,\delta_{\mathrm{F}}\rp\rp \nonumber\\
    &\overset{(b)}{\leq} \Pr_{\nu} \lbp \sup_{1 \leq k \leq \nu + d - 1} \lp \nu + d - k \rp \mathrm{kl} \lp \hat{\mu}_{k:\nu + d - 1}; \mu_{0} \rp \leq \beta_{\mrm{GLR}} \lp \nu + d - 1, \delta_{\mathrm{F}} \rp \rbp \nonumber\\
    &\overset{(c)}{\leq} \Pr_{\nu} \lbp d\mathrm{kl} \lp \hat{\mu}_{\nu: \nu + d - 1}; \mu_{0} \rp \leq \beta_{\mrm{GLR}} \lp \nu + d - 1, \delta_{\mathrm{F}} \rp \rbp \nonumber\\
    &\overset{(d)}{\leq} \Pr_{\nu} \lbp d \mathrm{kl} \lp \hat{\mu}_{\nu: \nu + d - 1}; \mu_{0} \rp \leq \beta_{\mrm{GLR}} \lp T, \delta_{\mrm{F}} \rp \rbp \nonumber\\
    &=\Pr_{\nu}\lbp\lba\hat{\mu}_{\nu:\nu + d - 1}-\mu_{0}\rba\leq\sqrt{\frac{2\sigma^{2}}{d} \beta_{\mrm{GLR}}\lp T,\delta_{\mathrm{F}}\rp}\rbp \nonumber\\
    &\overset{(e)}{\leq} \Pr_{\nu} \lbp \lba \hat{\mu}_{\nu: \nu + d - 1} - \mu_{1} \rba \geq \Delta - \sqrt{\frac{2 \sigma^{2}}{d} \beta_{\mrm{GLR}} \lp T, \delta_{\mathrm{F}} \rp} \rbp \label{eq:proof_Latency_GLR_post_unknown_1}
\end{align}
where step $(a)$ results from Lemma \ref{lem:GLR-kl-unknown-post}, and step $(b)$ is due to the fact that $\lbp \nu + d - 1 \rbp \subseteq \lbp 1, \dots, \nu + d - 1 \rbp$. In step $(c)$, we exploit the definition of supremum. In step $(d)$, we apply the fact that $\beta_{\mrm{GLR}}$ is increasing with the time step $n$. Step $(e)$ stems directly from triangle inequality. Next, by plugging in the definition of $d$ in \eqref{eq:latency-unknown-post}, we have
\begin{align}
    &\Pr_{\nu}\lp \tau_{\mrm{GLR}} \geq \nu + d \rp \nonumber\\
    &\overset{(a)}{\leq} \Pr_{\nu} \lbp \lba \hat{\mu}_{\nu: \nu + d - 1} - \mu_{1} \rba \geq \sqrt{\frac{2 \sigma^{2}}{d} \log \lp \frac{2}{\delta_{\mathrm{D}}} \rp} \rbp \nonumber\\
    &=\Pr_{\nu}\lbp\hat{\mu}_{\nu:\nu+d-1}-\mu_{1}\geq\sqrt{\frac{2\sigma^{2}}{d}\log\lp\frac{2}{\delta_{\mathrm{D}}}\rp}\rbp \nonumber\\
    &\quad+\Pr_{\nu}\lbp\hat{\mu}_{\nu:\nu+d-1}-\mu_{1}\geq\sqrt{\frac{2\sigma^{2}}{d}\log\lp\frac{2}{\delta_{\mathrm{D}}}\rp}\rbp \nonumber\\
    &\overset{(b)}{\leq}2\exp\lp-\frac{d}{2\sigma^{2}}\lp\sqrt{\frac{2\sigma^{2}}{d}\log\lp\frac{2}{\delta_{\mathrm{D}}}\rp}\rp^{2}\rp \nonumber\\
    &=\delta_{\mathrm{D}}\label{eq:proof_Latency_GLR_post_unknown_2}
\end{align}
where step $(a)$ stems from \eqref{eq:latency-unknown-post}. In step $(b)$, we exploit the $\sigma^{2}$-sub-Gaussianity of the post-change distribution by applying the Chernoff bound.

Now we focus on the GSR test in \eqref{def:GSR-unknown-post}. We first prove the upper bound on the false alarm probability of the GSR test using that of the GLR test in \eqref{def:GLR-unknown-post} as follows: Recall the definition of $W_{n}$ in \eqref{def:GSR-unknown-post} and $\beta_{\mrm{GSR}}$ in \eqref{def:GSR-unknown-post-thres}. For any $T \in \mbb{N}$,
\begin{align}
    &\Pr_{\infty} \lp \tau_{\mrm{GSR}} \leq T \rp \nonumber\\
    &=\Pr_{\infty} \lp \exists n \leq T: \log W_{n} \geq \beta_{\mrm{GSR}} \lp n, \delta_{\mrm{F}} \rp \rp \nonumber\\
    &=\Pr_{\infty} \lp \exists n \leq T: \sum_{k = 1}^{n}  \frac{\sup_{\mu\in\mbb{R}}\prod_{i=k}^{n}f_{\mu}(X_i)}{\prod_{i=k}^{n}f_{\mu_{0}}(X_i)} \geq n \exp \lp \beta_{\mrm{GLR}} \lp n, \delta_{\mrm{F}} \rp \rp \rp \nonumber\\
    &=\Pr_{\infty} \lp \exists n \leq T: \frac{1}{n} \sum_{k = 1}^{n}  \frac{\sup_{\mu\in\mbb{R}}\prod_{i=k}^{n}f_{\mu}(X_i)}{\prod_{i=k}^{n}f_{\mu_{0}}(X_i)} \geq \exp \lp \beta_{\mrm{GLR}} \lp n, \delta_{\mrm{F}} \rp \rp \rp \nonumber\\
    &\overset{(a)}{\leq} \Pr_{\infty} \lp \exists n \leq T: \sup_{1 \leq k \leq n}  \frac{\sup_{\mu\in\mbb{R}}\prod_{i=k}^{n}f_{\mu}(X_i)}{\prod_{i=k}^{n}f_{\mu_{0}}(X_i)} \geq \exp \lp \beta_{\mrm{GLR}} \lp n, \delta_{\mrm{F}} \rp \rp \rp \nonumber\\
    &= \Pr_{\infty} \lp \exists n \leq T: \sup_{1 \leq k \leq n} \log \lp \frac{\sup_{\mu\in\mbb{R}}\prod_{i=k}^{n}f_{\mu}(X_i)}{\prod_{i=k}^{n}f_{\mu_{0}}(X_i)} \rp \geq  \beta_{\mrm{GLR}} \lp n, \delta_{\mrm{F}} \rp \rp \nonumber\\
    &= \Pr_{\infty} \lp \exists n \leq T: G_{n} \geq  \beta_{\mrm{GLR}} \lp n, \delta_{\mrm{F}} \rp \rp \nonumber\\
    &= \Pr_{\infty} \lp \tau_{\mrm{GSR}} \leq T \rp \nonumber\\
    &\overset{(b)}{\leq} \delta_{\mrm{F}}.\label{eq:proof_FA_GSR_post_unknown}
\end{align}
In step $(a)$, we use the fact that for a set of real numbers, their mean is smaller than their supremum. In step $(b)$, we apply \eqref{eq:FA_GLR_post_unknown}.

Next, we prove the detection delay performance of the GSR test in \eqref{def:GSR-unknown-post}: Recall the definition of $d$ in \eqref{eq:latency-unknown-post} and $\Delta = \lba \mu_{0} - \mu_{1} \rba$. For any $\nu \in \lbp 1, \dots, T - d \rbp$, we have
\begin{align}
    &\Pr_{\nu} \lp \tau_{\mrm{GSR}} \geq \nu + d \rp \nonumber\\
    &=\Pr_{\nu}\lp \forall\,n \in \lbp 1, \dots, \nu + d - 1 \rbp: \log W_{n} \leq \beta_{\mrm{GSR}} \lp n,\delta_{\mathrm{F}}\rp\rp \nonumber\\
    &=\Pr_{\nu}\lp \forall\,n \in \lbp 1, \dots, \nu + d - 1 \rbp: \log \lp \sum_{k = 1}^{n}  \frac{\sup_{\mu\in\mbb{R}}\prod_{i=k}^{n}f_{\mu}(X_i)}{\prod_{i=k}^{n}f_{\mu_{0}}(X_i)} \rp \leq \beta_{\mrm{GSR}} \lp n,\delta_{\mathrm{F}}\rp\rp \nonumber\\
    &\overset{(a)}{\leq} \Pr_{\nu} \lbp \log \lp \sum_{k = 1}^{\nu + d - 1} \frac{\sup_{\mu \in \mbb{R}} \prod_{i = k}^{\nu + d - 1} f_{\mu}(X_i)}{\prod_{i = k}^{\nu + d - 1}f_{\mu_{0}}(X_i)} \rp \leq \beta_{\mrm{GSR}} \lp \nu + d - 1,\delta_{\mathrm{F}} \rp \rbp \nonumber\\
    &\leq \Pr_{\nu} \lbp \log \lp \frac{\sup_{\mu \in \mbb{R}} \prod_{i = \nu}^{\nu + d - 1} f_{\mu}(X_i)}{\prod_{i = \nu}^{\nu + d - 1}f_{\mu_{0}}(X_i)} \rp \leq \beta_{\mrm{GSR}} \lp \nu + d - 1,\delta_{\mathrm{F}} \rp \rbp \nonumber\\
    &\overset{(b)}{=} \Pr_{\nu} \lbp d\mathrm{kl} \lp \hat{\mu}_{\nu: \nu + d - 1}; \mu_{0} \rp \leq \beta_{\mrm{GSR}} \lp \nu + d - 1, \delta_{\mathrm{F}} \rp \rbp \nonumber\\
    &\overset{(c)}{\leq} \Pr_{\nu} \lbp d \mathrm{kl} \lp \hat{\mu}_{\nu: \nu + d - 1}; \mu_{0} \rp \leq \beta_{\mrm{GSR}} \lp T, \delta_{\mrm{F}} \rp \rbp \nonumber\\
    &=\Pr_{\nu}\lbp\lba\hat{\mu}_{\nu:\nu + d - 1}-\mu_{0}\rba\leq\sqrt{\frac{2\sigma^{2}}{d} \beta_{\mrm{GSR}}\lp T,\delta_{\mathrm{F}}\rp}\rbp \nonumber\\
    &\overset{(d)}{\leq} \Pr_{\nu} \lbp \lba \hat{\mu}_{\nu: \nu + d - 1} - \mu_{1} \rba \geq \Delta - \sqrt{\frac{2 \sigma^{2}}{d} \beta_{\mrm{GSR}} \lp T, \delta_{\mathrm{F}} \rp} \rbp \label{eq:proof_Latency_GSR_post_unknown}
\end{align}
where step $(a)$ stems from the fact that $\lbp \nu + d - 1 \rbp \subseteq \lbp 1, \dots, \nu + d - 1 \rbp$, and step $(b)$ is due to Lemma \ref{lem:GLR-kl-unknown-post}. In step $(c)$, we apply the fact that $\beta_{\mrm{GSR}}$ is increasing with the time step $n$. Step $(d)$ stems directly from triangle inequality. Following the same steps in \eqref{eq:proof_Latency_GLR_post_unknown_2}, we can prove that $\Pr_{\nu} \lp \tau_{\mrm{GSR}} \geq \nu + d \rp \leq \delta_{\mrm{D}}$.

\section{Proof of Theorem \ref{thm:pre-post-unknown}} \label{sec:thm2}

Similar to Theorem \ref{thm:unknown-post}, there are two tests to consider in Theorem \ref{thm:pre-post-unknown}: the GLR test in \eqref{def:GLR-unknown-pre-post} and GSR test in \eqref{def:GSR-unknown-pre-post}. For each test, there are two parts to prove: the false alarm constraint $\Pr_{\infty} \lp \tau \leq T \rp$ and the detection delay performance $\Pr_{\nu} \lp \tau \geq \nu + d \rp$. To prove these two inequalities for the GLR test, we first associate the GLR statistic with the empirical mean of the sub-Gaussian observations, so that we can exploit the sub-Gaussianity to apply the concentration inequality in Lemma \ref{lem:mix_martin}.

Recall that $\hat{\mu}_{m:n}$ is the empirical mean of samples $\lbp X_{m},\dots,X_{n}\rbp$ for any $m \leq n \in \mbb{N}$, and that $f_{\mu}$ is the density of a Gaussian random variable with mean $\mu$ and variance $\sigma^{2}$. We use the following lemma to represent the GLR statistic in \eqref{eq:GLR_statistics_pre-post-unknown} using the empirical mean of the observations. 

\begin{lemma}\label{lem:GLR-kl-unknown-pre-post}
For any $n\in\mbb{N}$ and any $k\in\lbp1,\dots,n\rbp$, we have:
\textup{\begin{align}
    &\log\lp\frac{\sup_{\mu_{0}'\in\mbb{R}}\prod_{i=1}^{k}f_{\mu_{0}'}\lp X_{i}\rp\sup_{\mu_{1}'\in\mbb{R}}\prod_{i=k+1}^{n}f_{\mu_{1}'}\lp X_{i}\rp}{\sup_{\mu\in\mbb{R}}\prod_{i=1}^{n}f_{\mu}\lp X_{i}\rp}\rp=k\mrm{kl}\lp\hat{\mu}_{1:k};\hat{\mu}_{1:n}\rp+\lp n-k\rp\mrm{kl}\lp\hat{\mu}_{k+1:n};\hat{\mu}_{1:n}\rp\label{eq:lr-kl}
\end{align}}
where \textup{$\mrm{kl}\lp x;y\rp\coloneqq\frac{(x-y)^{2}}{2\sigma^{2}}$} is the KL-divergence between two Gaussian distributions with common variance $\sigma^{2}$ and different means $x,y\in\mbb{R}$.
\end{lemma}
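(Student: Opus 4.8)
The plan is to reduce the two-sided identity \eqref{eq:lr-kl} to the one-sided identity already established in Lemma~\ref{lem:GLR-kl-unknown-post}. First I would recall the fact used in the proof of that lemma: for the Gaussian family with fixed variance $\sigma^{2}$, the likelihood $\prod_{i} f_{\mu}(X_{i})$ over any index block is maximized at the empirical mean of that block, so in particular $\sup_{\mu \in \mbb{R}} \prod_{i=1}^{n} f_{\mu}(X_{i}) = \prod_{i=1}^{n} f_{\hat{\mu}_{1:n}}(X_{i})$. Using this, multiply and divide the expression inside the logarithm on the left-hand side of \eqref{eq:lr-kl} by $\prod_{i=1}^{k} f_{\hat{\mu}_{1:n}}(X_{i}) \prod_{i=k+1}^{n} f_{\hat{\mu}_{1:n}}(X_{i})$, and split the resulting log into three terms.

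The first term is $\log\lp \sup_{\mu_{0}'\in\mbb{R}} \prod_{i=1}^{k} f_{\mu_{0}'}(X_{i}) \big/ \prod_{i=1}^{k} f_{\hat{\mu}_{1:n}}(X_{i}) \rp$, which by Lemma~\ref{lem:GLR-kl-unknown-post} applied to the block $\lbp 1,\dots,k\rbp$ with reference mean $\hat{\mu}_{1:n}$ equals $k\,\mrm{kl}(\hat{\mu}_{1:k};\hat{\mu}_{1:n})$. The second term is $\log\lp \sup_{\mu_{1}'\in\mbb{R}} \prod_{i=k+1}^{n} f_{\mu_{1}'}(X_{i}) \big/ \prod_{i=k+1}^{n} f_{\hat{\mu}_{1:n}}(X_{i}) \rp$, which by the same lemma applied to the block $\lbp k+1,\dots,n\rbp$ equals $(n-k)\,\mrm{kl}(\hat{\mu}_{k+1:n};\hat{\mu}_{1:n})$; when $k=n$ this block is empty and the term is $0$, consistent with the vanishing factor $n-k$. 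The third term is $\log\lp \prod_{i=1}^{n} f_{\hat{\mu}_{1:n}}(X_{i}) \big/ \sup_{\mu\in\mbb{R}} \prod_{i=1}^{n} f_{\mu}(X_{i}) \rp = 0$ by the maximizing property above. Summing the three terms gives exactly the right-hand side of \eqref{eq:lr-kl}.

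Alternatively, one can argue purely algebraically: substitute the three empirical means directly, so that the left-hand side becomes $\frac{1}{2\sigma^{2}}\lb \sum_{i=1}^{n}(X_{i}-\hat{\mu}_{1:n})^{2} - \sum_{i=1}^{k}(X_{i}-\hat{\mu}_{1:k})^{2} - \sum_{i=k+1}^{n}(X_{i}-\hat{\mu}_{k+1:n})^{2} \rb$, and then invoke the within/between sum-of-squares (ANOVA) decomposition, which identifies this difference with $k(\hat{\mu}_{1:k}-\hat{\mu}_{1:n})^{2} + (n-k)(\hat{\mu}_{k+1:n}-\hat{\mu}_{1:n})^{2}$; dividing by $2\sigma^{2}$ yields the claim. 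I expect no genuine obstacle here — the only points needing a line of care are justifying that each supremum is attained at the corresponding empirical mean (already handled in the proof of Lemma~\ref{lem:GLR-kl-unknown-post}) and checking the degenerate case $k=n$ where the post-change block is empty. The lemma-based route is the cleaner one and is the version I would write up.
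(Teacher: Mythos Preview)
Your proposal is correct. The paper's own proof does not invoke Lemma~\ref{lem:GLR-kl-unknown-post}; instead it carries out the direct algebraic expansion you sketch as your alternative route: substitute the Gaussian densities, replace each supremum by the corresponding empirical mean, cancel the $\sum X_i^2$ terms, and regroup into $k(\hat\mu_{1:k}-\hat\mu_{1:n})^2/(2\sigma^2)+(n-k)(\hat\mu_{k+1:n}-\hat\mu_{1:n})^2/(2\sigma^2)$. Your preferred lemma-based route is a clean modular variant of the same computation: by replacing $\sup_{\mu}\prod_{i=1}^n f_\mu(X_i)$ with $\prod_{i=1}^n f_{\hat\mu_{1:n}}(X_i)$ and splitting the denominator across the two blocks, you reduce to two applications of Lemma~\ref{lem:GLR-kl-unknown-post} and avoid redoing the quadratic algebra. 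Both approaches hinge on exactly the same fact (the MLE for the Gaussian mean is the sample average), so there is no substantive difference in strength; your version is just tidier and reuses existing machinery, while the paper's is self-contained.
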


\begin{proof} [Proof of Lemma \ref{lem:GLR-kl-unknown-pre-post}]
We can show that for any $n\in\mbb{N}$ and any $s\in\lbp1,\dots,n\rbp$,
\begin{align}\nonumber
    &\log\lp\frac{\sup_{\theta_{0}\in\mbb{R}}\prod_{i=1}^{s}f_{\theta_{0}}\lp X_{i}\rp\sup_{\theta_{1}\in\mbb{R}}\prod_{i=s+1}^{n}f_{\theta_{1}}\lp X_{i}\rp}{\sup_{\theta\in\mbb{R}}\prod_{i=1}^{n}f_{\theta}\lp X_{i}\rp}\rp\\\nonumber
    &=\log\lp\frac{\sup_{\theta_{0}\in\mbb{R}}\prod_{i=1}^{s}\frac{1}{\sqrt{2\pi\sigma^{2}}}\exp\lp-\frac{\lp X_{i}-\theta_{0}\sigma^{2}\rp^{2}}{2\sigma^{2}}\rp\sup_{\theta_{1}\in\mbb{R}}\prod_{i=s+1}^{n}\frac{1}{\sqrt{2\pi\sigma^{2}}}\exp\lp-\frac{\lp X_{i}-\theta_{1}\sigma^{2}\rp^{2}}{2\sigma^{2}}\rp}{\sup_{\theta\in\mbb{R}}\prod_{i=1}^{n}\frac{1}{\sqrt{2\pi\sigma^{2}}}\exp\lp-\frac{\lp X_{i}-\theta\sigma^{2}\rp^{2}}{2\sigma^{2}}\rp}\rp\\\nonumber
    &=\log\lp\frac{\exp\lp-\inf_{\theta_{0}\in\mbb{R}}\sum_{i=1}^{s}\frac{\lp X_{i}-\theta_{0}\sigma^{2}\rp^{2}}{2\sigma^{2}}\rp\exp\lp-\inf_{\theta_{1}\in\mbb{R}}\sum_{i=s+1}^{n}\frac{\lp X_{i}-\theta_{1}\sigma^{2}\rp^{2}}{2\sigma^{2}}\rp}{\exp\lp-\inf_{\theta\in\mbb{R}}\sum_{i=1}^{n}\frac{\lp X_{i}-\theta\sigma^{2}\rp^{2}}{2\sigma^{2}}\rp}\rp\\\nonumber
    &\overset{(a)}{=}\log\lp\frac{\exp\lp-\sum_{i=1}^{s}\frac{\lp X_{i}-\hat{\mu}_{1:s}\rp^{2}}{2\sigma^{2}}\rp\exp\lp-\sum_{i=s+1}^{n}\frac{\lp X_{i}-\hat{\mu}_{s+1:n}\rp^{2}}{2\sigma^{2}}\rp}{\exp\lp-\sum_{i=1}^{n}\frac{\lp X_{i}-\hat{\mu}_{1:n}\rp^{2}}{2\sigma^{2}}\rp}\rp\\\nonumber
    &=\log\lp\frac{\exp\lp-\sum_{i=1}^{s}\frac{X_{i}^{2}-2X_{i}\hat{\mu}_{1:s}+\hat{\mu}_{1:s}^{2}}{2\sigma^{2}}\rp\exp\lp-\sum_{i=s+1}^{n}\frac{X_{i}^{2}-2X_{i}\hat{\mu}_{s+1:n}+\hat{\mu}_{s+1:n}^{2}}{2\sigma^{2}}\rp}{\exp\lp-\sum_{i=1}^{n}\frac{X_{i}^{2}-2X_{i}\hat{\mu}_{1:n}+\hat{\mu}_{1:n}^{2}}{2\sigma^{2}}\rp}\rp\\\nonumber
    &=\sum_{i=1}^{s}\frac{2X_{i}\hat{\mu}_{1:s}-\hat{\mu}_{1:s}^{2}}{2\sigma^{2}}+\sum_{i=s+1}^{n}\frac{2X_{i}\hat{\mu}_{s+1:n}-\hat{\mu}_{s+1:n}^{2}}{2\sigma^{2}}-\sum_{i=1}^{n}\frac{2X_{i}\hat{\mu}_{1:n}-\hat{\mu}_{1:n}^{2}}{2\sigma^{2}}\\\nonumber
    &=s\frac{\hat{\mu}_{1:s}^{2}}{2\sigma^{2}}+\lp n-s\rp\frac{\hat{\mu}_{s+1:n}^{2}}{2\sigma^{2}}-n\frac{\hat{\mu}_{1:n}^{2}}{2\sigma^{2}}\\\nonumber
    &=s\frac{\hat{\mu}_{1:s}^{2}}{2\sigma^{2}}+\lp n-s\rp\frac{\hat{\mu}_{s+1:n}^{2}}{2\sigma^{2}}+s\frac{\hat{\mu}_{1:n}^{2}}{2\sigma^{2}}+\lp n-s\rp\frac{\hat{\mu}_{1:n}^{2}}{2\sigma^{2}}-2\frac{s\hat{\mu}_{1:s}\hat{\mu}_{1:n}}{2\sigma^{2}}-2\frac{\lp n-s\rp\hat{\mu}_{s+1:n}\hat{\mu}_{1:n}}{2\sigma^{2}}\\\nonumber
    &=s\frac{\lp\hat{\mu}_{1:s}-\hat{\mu}_{1:n}\rp^{2}}{2\sigma^{2}}+\lp n-s\rp\frac{\lp\hat{\mu}_{s+1:n}-\hat{\mu}_{1:n}\rp^{2}}{2\sigma^{2}}\\
    &=s\mrm{kl}\lp\hat{\mu}_{1:s};\hat{\mu}_{1:n}\rp+\lp n-s\rp\mrm{kl}\lp\hat{\mu}_{s+1:n};\hat{\mu}_{1:n}\rp\label{eq:lem1_proof}
\end{align}
where step $(a)$ follows from the fact that $\sum_{i=t}^{t'}\lp X_{i}-a\rp^{2}$ is minimized when $a=\hat{\mu}_{t:t'}$.
\end{proof}

Recall that $\mu_{0}$ is the pre-change mean of the sample sequence $\lp X_{i} \rp_{i = 1}^{\infty}$. By Lemmas \ref{lem:mix_martin} and \ref{lem:GLR-kl-unknown-pre-post}, for any $T \in \mbb{N}$,
\begin{align}\nonumber
    &\Pr_{\infty}\lp \tilde{\tau}_{\mrm{GLR}} \leq T\rp\\\nonumber
    &\leq \Pr_{\infty}\lp \tilde{\tau}_{\mrm{GLR}}<\infty\rp\\\nonumber
    &=\Pr_{\infty}\lp\exists\,n\in\mbb{N}:\;\sup_{1\leq k\leq n}\log\lp\frac{\sup_{\mu_{0}'\in\mbb{R}}\prod_{i=1}^{k}f_{\mu_{0}'}\lp X_{i}\rp\sup_{\mu_{1}'\in\mbb{R}}\prod_{i=k+1}^{n}f_{\mu_{1}'}\lp X_{i}\rp}{\sup_{\mu\in\mbb{R}}\prod_{i=1}^{n}f_{\mu}\lp X_{i}\rp}\rp\geq \tilde{\beta}_{\mrm{GLR}} \lp n,\delta_{\mrm{F}}\rp\rp\\\nonumber
    &\overset{(a)}{=}\Pr_{\infty}\lp\exists\,n\in\mbb{N}:\;\sup_{1\leq k\leq n}k\mrm{kl}\lp\hat{\mu}_{1:k},\hat{\mu}_{1:n}\rp+\lp n-k\rp\mrm{kl}\lp\hat{\mu}_{k+1:n},\hat{\mu}_{1:n}\rp\geq\tilde{\beta}_{\mrm{GLR}} \lp n,\delta_{\mrm{F}}\rp\rp\\\nonumber
    &=\Pr_{\infty}\left(\exists\,k\leq n\in\mathbb{N}:k\mrm{kl}(\hat{\mu}_{1:k},\hat{\mu}_{1:n})+(n-k)\mrm{kl}(\hat{\mu}_{k+1:n},\hat{\mu}_{1:n})> \tilde{\beta}_{\mrm{GLR}} \lp n,\delta_{\mrm{F}}\rp\right)\\\nonumber
    &\overset{(b)}{=}\Pr_{\infty}\left(\exists\,k\leq n\in\mbb{N}:\inf_{\mu}k\mrm{kl}(\hat{\mu}_{1:k},\mu)+(n-k)\mrm{kl}(\hat{\mu}_{k+1:n},\mu)> \tilde{\beta}_{\mrm{GLR}}(n,\delta_{\mrm{F}})\right)\\\nonumber
    &\leq\Pr_{\infty}\left(\exists\,k\leq n\in\mathbb{N}:k\mrm{kl}(\hat{\mu}_{1:k},\mu_{0})+(n-k)\mrm{kl}(\hat{\mu}_{k+1:n},\mu_{0})> \tilde{\beta}_{\mrm{GLR}}(n,\delta_{\mrm{F}})\right)\\\nonumber
    &=\Pr_{\infty}\Bigg(\exists\,k,r\in\mathbb{N}:\\\nonumber
    &\quad\quad\quad\quad s\mrm{kl}(\hat{\mu}_{1:k},\mu_{0})+r\mrm{kl}(\hat{\mu}_{k+1:k+r},\mu_{0})>6\log(1+\log(k+r))+\frac{5}{2}\log\lp\frac{4\lp k+r\rp^{\frac{3}{2}}}{\delta_{\mrm{F}}}\rp+11\Bigg)\\\nonumber
    &\overset{(c)}{\leq}\Pr_{\infty}\Bigg(\exists\,k,r\in\mathbb{N}:\lbp k\mrm{kl}(\hat{\mu}_{1:k},\mu_{0})>3\log(1+\log(k+r))+\frac{5}{4}\log\lp\frac{4\lp k+r\rp^{3/2}}{\delta_{\mrm{F}}}\rp+\frac{11}{2}\rbp\cup\\\nonumber
    &\quad\quad\quad\Bigg\{ r\mrm{kl}(\hat{\mu}_{k+1:k+r},\mu_{0}) > 3\log(1+\log(k+r)) +\frac{5}{4}\log\lp\frac{4\lp k+r\rp^{3/2}}{\delta_{\mrm{F}}}\rp+\frac{11}{2} \Bigg\} \Bigg)\\\nonumber
    &=\Pr_{\infty}\Bigg(\Bigg\{\exists\,k,r\in\mathbb{N}:k\mrm{kl}(\hat{\mu}_{1:k},\mu_{0})>3\log(1+\log(k+r))+\frac{5}{4}\log\lp\frac{4\lp k+r\rp^{3/2}}{\delta_{\mrm{F}}}\rp+\frac{11}{2}\Bigg\}\cup\\\nonumber
    &\quad\quad\quad\Bigg\{\exists\,k,r\in\mathbb{N}:r\mrm{kl}(\hat{\mu}_{k+1:k+r},\mu_{0})>3\log(1+\log(k+r))+\frac{5}{4}\log\lp\frac{4\lp k+r\rp^{3/2}}{\delta_{\mrm{F}}}\rp+\frac{11}{2}\Bigg\}\Bigg)\\\nonumber
    &\overset{(d)}{\leq}\Pr_{\infty}\bigg(\left\{\exists\,k\in\mathbb{N}:k\mrm{kl}(\hat{\mu}_{1:k},\mu_{0})>3\log(1+\log(k))+\frac{5}{4}\log\lp\frac{4}{\delta_{\mrm{F}}}\rp+\frac{11}{2}\right\}\cup\\\nonumber
    &\quad\quad\quad\;\bigg\{\exists\,k,r\in\mathbb{N}:r\mrm{kl}(\hat{\mu}_{k+1:k+r},\mu_{0})>3\log(1+\log(r))+\frac{5}{4}\log\lp\frac{4k^{3/2}}{\delta_{\mrm{F}}}\rp+\frac{11}{2}\bigg\}\bigg) \\\nonumber
    &\overset{(e)}{\leq}\Pr_{\infty}\left(\exists\,k\in\mathbb{N}:k\mrm{kl}(\hat{\mu}_{1:k},\mu_{0})>3\log(1+\log(k))+\frac{5}{4}\log\lp\frac{4}{\delta_{\mrm{F}}}\rp+\frac{11}{2}\right)+\\\nonumber
    &\quad\;\Pr_{\infty}\left(\exists\,k,r\in\mathbb{N}:r\mrm{kl}(\hat{\mu}_{k+1:k+r},\mu_{0})>3\log(1+\log(r))+\frac{5}{4}\log\lp\frac{4k^{3/2}}{\delta_{\mrm{F}}}\rp+\frac{11}{2}\right)\\\nonumber
    &=\Pr_{\infty}\left(\exists\,k\in\mathbb{N}:k\mrm{kl}(\hat{\mu}_{1:k},\mu_{0})-3\log(1+\log(k))>\frac{5}{4}\log\lp\frac{4}{\delta_{\mrm{F}}}\rp+\frac{11}{2}\right)+\\\nonumber
    &\quad\;\Pr_{\infty}\bigg(\bigcup_{k=1}^{\infty}\bigg\{\exists\,r\in\mbb{N}:r\mrm{kl}(\hat{\mu}_{k+1:k+r},\mu_{0})-3\log(1+\log(r))>\frac{5}{4}\log\lp\frac{4k^{3/2}}{\delta_{\mrm{F}}}\rp + \frac{11}{2}\bigg\}\bigg)\\\nonumber
    &\overset{(f)}{\leq}\Pr_{\infty}\left(\exists\,k\in\mathbb{N}:k\mrm{kl}(\hat{\mu}_{1:k},\mu_{0})-3\log(1+\log(k))>\frac{5}{4}\log\lp\frac{4}{\delta_{\mrm{F}}}\rp+\frac{11}{2}\right)+\\\nonumber
    &\quad\;\sum_{k=1}^{\infty}\Pr_{\infty}\bigg(\exists\,r\in\mbb{N}:r\mrm{kl}(\hat{\mu}_{k+1:k+r},\mu_{0})-3\log(1+\log(r))>\frac{5}{4}\log\lp\frac{4k^{3/2}}{\delta_{\mrm{F}}}\rp+\frac{11}{2}\bigg)\\\nonumber
    &\overset{(g)}{\leq}\frac{\delta_{\mrm{F}}}{4}+\sum_{k=1}^{\infty}\frac{\delta_{\mrm{F}}}{4k^{3/2}}\\
    &\leq\delta_{\mrm{F}}\label{eq:Prop_2_proof}
\end{align}
where step $(a)$ is due to Lemma \ref{lem:GLR-kl-unknown-pre-post} and step $(b)$ is owing to the fact that $\inf_{\mu} k \lp \hat{\mu}_{1:k} - \mu \rp^{2} + \lp n - k \rp \lp \hat{\mu}_{k + 1 : n} - \mu \rp^{2} = k \lp \hat{\mu}_{1:k} - \hat{\mu}_{1:n} \rp^{2} + \lp n - k \rp \lp \hat{\mu}_{k + 1 : n} - \hat{\mu}_{1:n} \rp^{2}$. Step $(c)$ is due to the fact that $x+y>2a$ implies $x>a$ or $y>a$. Step $(d)$ stems from the fact that $\beta\lp n,\delta\rp$ is increasing with $n$. Steps $(e)$ and $(f)$ are owing to the union bound. By Lemma \ref{lem:mix_martin}, we obtain step $(g)$. This completes the proof of the false alarm constraint in Theorem \ref{thm:pre-post-unknown}.

We now move on to proving the detection delay performance $\Pr_{\nu} \lp \tilde{\tau}_{\mrm{GLR}} \geq \nu + d \rp$ in Theorem \ref{thm:pre-post-unknown}. To this end, we use the following lemma borrowed from \cite{besson2022efficient} as our concentration inequality:

\begin{lemma}[Lemma 10 in \cite{besson2022efficient}] \label{lem:sub_Gaussian_diff}
    Let $\hat{\mu}_{i,s}$ be the empirical mean of $s$ i.i.d. $\sigma^{2}$-sub-Gaussian samples with mean $\mu_{i}$, $i\in\lbp0,1\rbp$. Then, $\forall\,s,r\in\mbb{N}$, we have

    \begin{equation}
        \Pr\lp\frac{sr}{s+r}\lp\lp\hat{\mu}_{0,s}-\hat{\mu}_{1,r}\rp-\lp\mu_{0}-\mu_{1}\rp\rp^{2}>u\rp\leq2\exp\lp-\frac{u}{2\sigma^{2}}\rp.
    \end{equation}
\end{lemma}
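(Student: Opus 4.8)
The plan is to recognize the quantity inside the probability as a rescaled square of a single centered sub-Gaussian random variable and then apply the standard sub-Gaussian tail bound. Set $D \coloneqq (\hat{\mu}_{0,s}-\hat{\mu}_{1,r})-(\mu_{0}-\mu_{1}) = (\hat{\mu}_{0,s}-\mu_{0})-(\hat{\mu}_{1,r}-\mu_{1})$. First I would bound the sub-Gaussianity of each empirical deviation: since $\hat{\mu}_{0,s}-\mu_{0}$ is the average of $s$ i.i.d.\ centered $\sigma^{2}$-sub-Gaussian variables, its moment generating function obeys $\E[e^{\lambda(\hat{\mu}_{0,s}-\mu_{0})}] \le e^{\lambda^{2}\sigma^{2}/(2s)}$, i.e.\ it is $(\sigma^{2}/s)$-sub-Gaussian, and likewise $\hat{\mu}_{1,r}-\mu_{1}$ is $(\sigma^{2}/r)$-sub-Gaussian.

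Next I would use that the two sample blocks defining $\hat{\mu}_{0,s}$ and $\hat{\mu}_{1,r}$ are independent (in the intended application they are drawn from disjoint index windows) to multiply the two MGF bounds, obtaining $\E[e^{\lambda D}] \le \exp(\frac{\lambda^{2}}{2}(\frac{\sigma^{2}}{s}+\frac{\sigma^{2}}{r})) = \exp(\frac{\lambda^{2}}{2}\cdot\frac{\sigma^{2}(s+r)}{sr})$; hence $D$ is $v$-sub-Gaussian with $v \coloneqq \sigma^{2}(s+r)/(sr)$. A one-line Chernoff argument — Markov applied to $e^{\lambda D}$, optimizing over $\lambda = t/v$, then repeating the bound for $-D$ and union bounding — then gives $\Pr(|D| \ge t) \le 2e^{-t^{2}/(2v)}$ for every $t \ge 0$.

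The last step is pure bookkeeping: assuming $u > 0$ (for $u \le 0$ the claimed bound is at least $1$ and holds trivially), the event $\{\frac{sr}{s+r}D^{2} > u\}$ is exactly $\{|D| > \sqrt{(s+r)u/(sr)}\}$, and substituting $t = \sqrt{(s+r)u/(sr)}$ into the tail bound yields $t^{2}/(2v) = \frac{(s+r)u/(sr)}{2\sigma^{2}(s+r)/(sr)} = u/(2\sigma^{2})$, which is precisely the stated bound $2e^{-u/(2\sigma^{2})}$. I do not expect any genuine obstacle here — this is a textbook sub-Gaussian concentration computation — and the only point worth an explicit sentence is the \emph{independence} of $\hat{\mu}_{0,s}$ and $\hat{\mu}_{1,r}$, since that is exactly what licenses adding the variance proxies $\sigma^{2}/s$ and $\sigma^{2}/r$; everything else is algebra.
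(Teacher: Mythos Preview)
The paper does not actually prove this lemma; it is quoted verbatim from \cite{besson2022efficient} (their Lemma~10) and used as a black box in the latency analysis. Your argument is correct and is the standard one: each centered empirical mean is sub-Gaussian with variance proxy $\sigma^{2}/s$ (respectively $\sigma^{2}/r$), independence lets you add the proxies to get that $D$ is $\sigma^{2}(s+r)/(sr)$-sub-Gaussian, and the two-sided Chernoff bound finishes. There is nothing to contrast, and your explicit flag on the independence hypothesis is the right caveat --- the lemma statement in the paper leaves it implicit, but it is exactly what holds in the application (the two empirical means are computed over disjoint blocks $\{1,\dots,\nu-1\}$ and $\{\nu,\dots,\nu+d-1\}$ of independent observations).
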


Continuing with the proof of the latency, for convenience in notation,  using Lemma \ref{lem:GLR-kl-unknown-pre-post}, we can show that for any $T\in\mbb{N},\;\delta_{\mrm{D}},\delta_{\mrm{F}}\in\lp0,1\rp,\;\Delta>0$, $m>\frac{8\sigma^{2}}{\Delta^{2}}\tilde{\beta}_{\mrm{GLR}}\lp T,\delta_{\mrm{F}}\rp$, and $\nu\in\lbp m+1,\dots,T-d\rbp$, we have
\begin{align}\nonumber
    &\Pr_{\nu}\lp \tilde{\tau}_{\mrm{GLR}}\geq\nu+d\rp\\\nonumber
    &=\Pr_{\nu}\bigg(\forall\,n\in\lbp1,\dots,\nu+d-1\rbp:\;\sup_{1\leq k\leq n}\log\lp\frac{\sup_{\mu_{0}'\in\mbb{R}}\prod_{i=1}^{k}f_{\mu_{0}'}\lp X_{i}\rp\sup_{\mu_{1}'\in\mbb{R}}\prod_{i=k+1}^{n}f_{\mu_{1}'}\lp X_{i}\rp}{\sup_{\mu\in\mbb{R}}\prod_{i=1}^{n}f_{\mu}\lp X_{i}\rp}\rp\\\nonumber
    &\quad\quad\quad<\tilde{\beta}_{\mrm{GLR}}\lp n,\delta_{\mrm{F}}\rp\bigg)\\\nonumber
    &\overset{(a)}{=}\Pr_{\nu}\lp\forall\,n\in\lbp1,\dots,\nu+d-1\rbp:\;\sup_{1\leq k\leq n}k\mrm{kl}\lp\hat{\mu}_{1:k},\hat{\mu}_{1:n}\rp+\lp n-k\rp\mrm{kl}\lp\hat{\mu}_{k+1:n},\hat{\mu}_{1:n}\rp<\tilde{\beta}_{\mrm{GLR}}\lp n,\delta_{\mrm{F}}\rp\rp\\\nonumber
    &\overset{(b)}{\leq}\Pr\bigg(\sup_{1\leq k\leq \nu+d-1} k\mrm{kl}\lp\hat{\mu}_{1:k},\hat{\mu}_{1:\nu+d-1}\rp+ \lp\nu+d-1-k\rp\mrm{kl}\lp\hat{\mu}_{k+1:\nu+d-1},\hat{\mu}_{1:\nu+d-1}\rp\\\nonumber
    &\quad\quad\quad<\tilde{\beta}_{\mrm{GLR}} \lp \nu+d-1,\delta_{\mrm{F}}\rp\bigg)\\\nonumber
    &\leq \Pr_{\nu}\lp\lp\nu-1\rp\mrm{kl}\lp\hat{\mu}_{1:\nu-1},\hat{\mu}_{1:\nu+d-1}\rp+d\mrm{kl}\lp\hat{\mu}_{\nu:\nu+d-1},\hat{\mu}_{1:\nu+d-1}\rp<\tilde{\beta}_{\mrm{GLR}}\lp\nu+d-1,\delta_{\mrm{F}}\rp\rp\\\nonumber
    &=\Pr_{\nu}\Bigg(\frac{\nu-1}{2\sigma^{2}}\lp\hat{\mu}_{1:\nu-1}-\frac{\lp\nu-1\rp\hat{\mu}_{1:\nu-1}+d\hat{\mu}_{\nu:\nu+d-1}}{\nu+d-1}\rp^{2}\\\nonumber
    &\quad\quad\quad+\frac{d}{2\sigma^{2}}\lp\hat{\mu}_{\nu:\nu+d-1}-\frac{\lp\nu-1\rp\hat{\mu}_{1:\nu-1}+d\hat{\mu}_{\nu:\nu+d-1}}{\nu+d-1}\rp^{2}<\tilde{\beta}_{\mrm{GLR}}\lp\nu+d-1,\delta_{\mrm{F}}\rp\Bigg)\\
    &=\Pr_{\nu}\lp\frac{\lp\nu-1\rp d}{2\sigma^{2}\lp\nu+d-1\rp}\lp\hat{\mu}_{1:\nu-1}-\hat{\mu}_{\nu:\nu+d-1}\rp^{2}<\tilde{\beta}_{\mrm{GLR}}\lp\nu+d-1,\delta_{\mrm{F}}\rp\rp\label{eq:glr-ld-1}
\end{align}
where step $(a)$ comes from Lemma \ref{lem:GLR-kl-unknown-pre-post} and step $(b)$ results from $\lbp\nu+d-1\rbp\subseteq\lbp1,\dots,\nu+d-1\rbp$. 

Recall that $\mu_{0}$ and $\mu_{1}$ are the pre- and post-change means, and that the definition of $d$ is given in \eqref{eq:d}. For applying Lemma \ref{lem:sub_Gaussian_diff}, we need to convert $\lp\hat{\mu}_{1:\nu-1}-\hat{\mu}_{\nu:\nu+d-1}\rp^{2}$ in the last line of \eqref{eq:glr-ld-1} into $\lp\lp\hat{\mu}_{1:\nu-1}-\hat{\mu}_{\nu:\nu+d-1}\rp-\lp\mu_{0}-\mu_{1}\rp\rp^{2}$. To this end, we show that for any $\nu\in\lbp m+1,\dots,T-d\rbp$,   $\lbp\frac{\lp\nu-1\rp d}{2\sigma^{2}\lp\nu+d-1\rp}\lp\hat{\mu}_{1:\nu-1}-\hat{\mu}_{\nu:\nu+d-1}\rp^{2}<\tilde{\beta}_{\mrm{GLR}}\lp\nu+d-1,\delta_{\mrm{F}}\rp\rbp$ implies  $\lbp\frac{\lp\nu-1\rp d}{2\sigma^{2}\lp\nu+d-1\rp}\lp\lp\hat{\mu}_{1:\nu-1}-\hat{\mu}_{\nu:\nu+d-1}\rp-\lp\mu_{0}-\mu_{1}\rp\rp^{2}\geq\tilde{\beta}_{\mrm{GLR}}\lp\nu+d-1,\delta_{\mrm{F}}\rp\rbp$ with the choice of $m$ in \eqref{eq:m} and $d$ in \eqref{eq:d}. Let $\beta = \tilde{\beta}_{\mrm{GLR}}$ for notational convenience. We can show this implication as follows:
\begin{align}\nonumber
    &\lbp\frac{\lp\nu-1\rp d}{2\sigma^{2}\lp\nu+d-1\rp}\lp\hat{\mu}_{1:\nu-1}-\hat{\mu}_{\nu:\nu+d-1}\rp^{2}<\beta\lp\nu+d-1,\delta_{\mrm{F}}\rp\rbp\\\nonumber
    &\quad\cap\lbp\frac{\lp\nu-1\rp d}{2\sigma^{2}\lp\nu+d-1\rp}\lp\lp\hat{\mu}_{1:\nu-1}-\hat{\mu}_{\nu:\nu+d-1}\rp-\lp\mu_{0}-\mu_{1}\rp\rp^{2}<\beta\lp\nu+d-1,\delta_{\mrm{F}}\rp\rbp\\\nonumber
    &=\lbp\lba\hat{\mu}_{1:\nu-1}-\hat{\mu}_{\nu:\nu+d-1}\rba<\lp\frac{2\sigma^{2}\lp\nu+d-1\rp}{\lp\nu-1\rp d}\beta\lp\nu+d-1,\delta_{\mrm{F}}\rp\rp^{\frac{1}{2}}\rbp\\\nonumber
    &\quad\cap\lbp\lba\lp\hat{\mu}_{1:\nu-1}-\hat{\mu}_{\nu:\nu+d-1}\rp-\lp\mu_{0}-\mu_{1}\rp\rba<\lp\frac{2\sigma^{2}\lp\nu+d-1\rp}{\lp\nu-1\rp d}\beta\lp\nu+d-1,\delta_{\mrm{F}}\rp\rp^{\frac{1}{2}}\rbp\\\nonumber
    &\overset{(a)}{\subseteq}\lbp\lba\hat{\mu}_{1:\nu-1}-\hat{\mu}_{\nu:\nu+d-1}\rba<\lp\frac{2\sigma^{2}\lp\nu+d-1\rp}{\lp\nu-1\rp d}\beta\lp\nu+d-1,\delta_{\mrm{F}}\rp\rp^{\frac{1}{2}}\rbp\\\nonumber
    &\quad\cap\lbp\lba\mu_{0}-\mu_{1}\rba-\lba\hat{\mu}_{1:\nu-1}-\hat{\mu}_{\nu:\nu+d-1}\rba<\lp\frac{2\sigma^{2}\lp\nu+d-1\rp}{\lp\nu-1\rp d}\beta\lp\nu+d-1,\delta_{\mrm{F}}\rp\rp^{\frac{1}{2}}\rbp\\\nonumber
    &=\lbp\lba\hat{\mu}_{1:\nu-1}-\hat{\mu}_{\nu:\nu+d-1}\rba<\lp\frac{2\sigma^{2}\lp\nu+d-1\rp}{\lp\nu-1\rp d}\beta\lp\nu+d-1,\delta_{\mrm{F}}\rp\rp^{\frac{1}{2}}\rbp\\\nonumber
    &\quad\cap\lbp\lba\hat{\mu}_{1:\nu-1}-\hat{\mu}_{\nu:\nu+d-1}\rba>\Delta-\lp\frac{2\sigma^{2}\lp\nu+d-1\rp}{\lp\nu-1\rp d}\beta\lp\nu+d-1,\delta_{\mrm{F}}\rp\rp^{\frac{1}{2}}\rbp\\\nonumber
    &\subseteq\lbp\Delta<2\lp\frac{2\sigma^{2}\lp\nu+d-1\rp}{\lp\nu-1\rp d}\beta\lp\nu+d-1,\delta_{\mrm{F}}\rp\rp^{\frac{1}{2}}\rbp\\\nonumber
    &=\lbp\Delta^{2}<8\sigma^{2}\lp\frac{1}{\nu-1}+\frac{1}{d}\rp\beta\lp\nu+d-1,\delta_{\mrm{F}}\rp\rbp\\\nonumber
    &\overset{(b)}{\subseteq}\lbp\Delta^{2}<8\sigma^{2}\lp\frac{1}{m}+\frac{1}{d}\rp\beta\lp T,\delta_{\mrm{F}}\rp\rbp\\\nonumber
    &=\lbp\lp\frac{\Delta^{2}}{8\sigma^{2}\beta\lp T,\delta_{\mrm{F}}\rp}-\frac{1}{m}\rp^{-1}>d\rbp\\\nonumber
    &=\lbp \frac{8\sigma^{2}m\beta\lp T,\delta_{\mrm{F}}\rp}{\Delta^{2}m-8\sigma^{2}\beta\lp T,\delta_{\mrm{F}}\rp}>\lce\max\lbp\frac{8\sigma^{2}m\beta\lp T,\delta_{\mrm{F}}\rp}{\Delta^{2}m-8\sigma^{2}\beta\lp T,\delta_{\mrm{F}}\rp},\frac{\delta_{\mrm{F}}^{2/3}}{2^{16/15}\delta_{\mrm{D}}^{4/15}}-m\rbp\rce\rbp\\
    &=\emptyset\label{eq:implication_delay}
\end{align}
where step $(a)$ is due to triangle inequality and step $(b)$ is due to the fact that $\nu\geq m+1$ and $\nu\leq T-d$. Hence, the late detection probability can be bounded using Lemma \ref{lem:sub_Gaussian_diff} and we obtain that for any $\nu\in\lbp m+1,\dots,T-d\rbp$
\begin{align}\nonumber
    &\Pr_{\nu}\lp\tilde{\tau}_{\mrm{GLR}}\geq\nu+d\rp\\\nonumber
    &\leq\Pr_{\nu}\lp\frac{\lp\nu-1\rp d}{2\sigma^{2}\lp\nu+d-1\rp}\lp\lp\hat{\mu}_{1:\nu-1}-\hat{\mu}_{\nu:\nu+d-1}\rp-\lp\mu_{0}-\mu_{1}\rp\rp^{2}\geq\tilde{\beta}_{\mrm{GLR}}\lp\nu+d-1,\delta_{\mrm{F}}\rp\rp\\\nonumber
    &\overset{(a)}{\leq}\Pr_{\nu}\lp\frac{\lp\nu-1\rp d}{2\sigma^{2}\lp\nu+d-1\rp}\lp\lp\hat{\mu}_{1:\nu-1}-\hat{\mu}_{\nu:\nu+d-1}\rp-\lp\mu_{0}-\mu_{1}\rp\rp^{2}\geq\tilde{\beta}_{\mrm{GLR}}\lp m+d,\delta_{\mrm{F}}\rp\rp\\\nonumber
    &\overset{(b)}{\leq}\Pr_{\nu}\lp\frac{\lp\nu-1\rp d}{2\sigma^{2}\lp\nu+d-1\rp}\lp\lp\hat{\mu}_{1:\nu-1}-\hat{\mu}_{\nu:\nu+d-1}\rp-\lp\mu_{0}-\mu_{1}\rp\rp^{2}\geq\frac{5}{2}\log\lp\frac{4\lp m+d\rp^{3/2}}{\delta_{\mrm{F}}}\rp\rp\\\nonumber
    &=\Pr_{\nu}\lp\frac{\lp\nu-1\rp d}{\nu+d-1}\lp\lp\hat{\mu}_{1:\nu-1}-\hat{\mu}_{\nu:\nu+d-1}\rp-\lp\mu_{0}-\mu_{1}\rp\rp^{2}\geq2\sigma^{2}\log\lp\frac{32\lp m+d \rp^{15/4}}{\delta_{\mrm{F}}^{5/2}}\rp\rp\\\nonumber
    &\overset{(c)}{\leq}\frac{\delta_{\mrm{F}}^{5/2}}{16\lp m+d\rp^{15/4}}\\\nonumber
    &\leq\frac{\delta_{\mrm{F}}^{5/2}}{16\lp\delta_{\mrm{F}}^{5/2}2^{-16/15}\delta^{-4/15}_{\mrm{D}}\rp^{15/4}}\\
    &=\delta_{\mrm{D}}
\end{align}
where step $(a)$ is due to the fact that $\beta\lp n,\delta_{\mrm{F}}\rp$ is increasing with $n$, whereas step $(b)$ is owing to the fact that $\tilde{\beta}_{\mrm{GLR}} \lp n,\delta_{\mrm{F}}\rp\geq\frac{5}{2}\log\lp4n^{3/2}/\delta_{\mrm{F}}\rp$. Step $(c)$ comes from Lemma \ref{lem:sub_Gaussian_diff}.

Next, we prove the results for the GSR test: To prove the upper bound on false alarm probability, we use our results for the false alarm probability of the GLR test \eqref{eq:Prop_2_proof}. Similarly, the proof for the upper bound on the late detection probability $\Pr_{\nu} \lp \tilde{\tau}_{\mrm{GSR}} \geq \nu + d \rp$ uses the concentration inequality in Lemma \ref{lem:sub_Gaussian_diff}, and the steps are analogous to those used in the proof for $\Pr_{\nu} \lp \tilde{\tau}_{\mrm{GLR}} \geq \nu + d \rp \leq \delta_{\mrm{D}}$. 

We first prove $\Pr_{\infty}\lp\tilde{\tau}_{\mrm{GSR}} \leq T \rp$ using \eqref{eq:Prop_2_proof} as follows:
\begin{align}\nonumber
    &\Pr_{\infty}\lp\tilde{\tau}_{\mrm{GSR}} \leq T \rp\\\nonumber
    &=\Pr_{\infty}\lp\exists\,n\in\lbp 1, \dots, T \rbp:\;\log \tilde{W}_{n}\geq\tilde{\beta}_{\mrm{GLR}}\lp n,\delta_{\mrm{F}}\rp + \log n \rp\\\nonumber
    &=\Pr_{\infty}\Bigg(\exists\,n\in\lbp 1, \dots, T \rbp:\\\nonumber
    &\quad\quad\quad\;\sum_{k=1}^{n}\frac{\sup_{\mu_{0}'\in\mbb{R}}\prod_{i=1}^{k}f_{\mu_{0}'}\lp X_{i}\rp\sup_{\mu_{1}'\in\mbb{R}}\prod_{i=k+1}^{n}f_{\mu_{1}'}\lp X_{i}\rp}{\sup_{\mu\in\mbb{R}}\prod_{i=1}^{n}f_{\mu}\lp X_{i}\rp}\geq n\exp\lp\tilde{\beta}_{\mrm{GLR}}\lp n,\delta_{\mrm{F}}\rp\rp\Bigg)\\\nonumber
    &=\Pr_{\infty}\bigg(\exists\,n\in\lbp 1, \dots, T \rbp:\\\nonumber
    &\quad\quad\quad\;\frac{1}{n}\sum_{k=1}^{n}\frac{\sup_{\mu_{0}'\in\mbb{R}}\prod_{i=1}^{k}f_{\mu_{0}'}\lp X_{i}\rp\sup_{\mu_{1}'\in\mbb{R}}\prod_{i=k+1}^{n}f_{\mu_{1}'}\lp X_{i}\rp}{\sup_{\mu\in\mbb{R}}\prod_{i=1}^{n}f_{\mu}\lp X_{i}\rp}\geq\exp\lp\tilde{\beta}_{\mrm{GLR}}\lp n,\delta_{\mrm{F}}\rp\rp\bigg)\\\nonumber
    &\overset{(a)}{\leq}\Pr_{\infty}\bigg(\exists\,n\in\lbp 1, \dots, T \rbp:\\\nonumber
    &\quad\quad\quad\;\sup_{1\leq k\leq n}\frac{\sup_{\mu_{0}'\in\mbb{R}}\prod_{i=1}^{k}f_{\mu_{0}'}\lp X_{i}\rp\sup_{\mu_{1}'\in\mbb{R}}\prod_{i=k+1}^{n}f_{\mu_{1}'}\lp X_{i}\rp}{\sup_{\mu\in\mbb{R}}\prod_{i=1}^{n}f_{\mu}\lp X_{i}\rp}\geq\exp\lp\tilde{\beta}_{\mrm{GLR}}\lp n,\delta_{\mrm{F}}\rp\rp\bigg)\\\nonumber
    &=\Pr_{\infty}\bigg(\exists\,n\in\lbp 1, \dots, T \rbp:\\\nonumber
    &\quad\quad\quad\;\sup_{1\leq k\leq n}\log\lp\frac{\sup_{\mu_{0}'\in\mbb{R}}\prod_{i=1}^{k}f_{\mu_{0}'}\lp X_{i}\rp\sup_{\mu_{1}'\in\mbb{R}}\prod_{i=k+1}^{n}f_{\mu_{1}'}\lp X_{i}\rp}{\sup_{\mu\in\mbb{R}}\prod_{i=1}^{n}f_{\mu}\lp X_{i}\rp}\rp\geq\tilde{\beta}_{\mrm{GLR}}\lp n,\delta_{\mrm{F}}\rp\bigg)\\\nonumber
    &=\Pr_{\infty}\lp\exists\,n\in\lbp 1, \dots, T \rbp:\; \tilde{G}_{n}\geq\tilde{\beta}_{\mrm{GLR}}\lp n,\delta_{\mrm{F}}\rp\rp\\\nonumber
    &=\Pr_{\infty}\lp \tilde{\tau}_{\mrm{GLR}} \leq T \rp\\
    &\overset{(b)}{\leq}\delta_{\mrm{F}}\label{eq:gsr-fa-1}
\end{align}
where step $(a)$ results from the fact that $\sup_{1\leq i\leq n}x_{i}\leq a$ implies $\sum_{i=1}^{n}x_{i}\leq na$, whereas $(b)$ stems from \eqref{eq:Prop_2_proof}. 

Next, we prove the upper bound on the late detection probability $\Pr_{\nu}\lp\tilde{\tau}_{\mrm{GSR}}\geq\nu+d\rp$: For any $T\in\mbb{N},\;\,\delta_{\mrm{D}},\delta_{\mrm{F}}\in\lp0,1\rp,\;\Delta>0$, $m>\frac{8\sigma^{2}}{\Delta^{2}}\tilde{\beta}_{\mrm{GSR}}\lp T,\delta_{\mrm{F}}\rp$, and $\nu\in\lbp m+1,\dots,T-d\rbp$, we have
\begin{align}\nonumber
    &\Pr_{\nu}\lp\tilde{\tau}_{\mrm{GSR}}\geq\nu+d\rp\\\nonumber
    &=\Pr_{\nu}\lp\forall\,n<\nu+ d:\;\log \tilde{W}_{n}<\tilde{\beta}_{\mrm{GSR}}\lp n,\delta_{\mrm{F}}\rp \rp\\\nonumber
    &=\Pr_{\nu}\lp\forall\,n<\nu+ d:\;\log\lp\sum_{k=1}^{n}\frac{\sup_{\mu_{0}'\in\mbb{R}}\prod_{i=1}^{k}f_{\mu_{0}'}\lp X_{i}\rp\sup_{\mu_{1}'\in\mbb{R}}\prod_{i=k+1}^{n}f_{\mu_{1}'}\lp X_{i}\rp}{\sup_{\mu\in\mbb{R}}\prod_{i=1}^{n}f_{\mu}\lp X_{i}\rp}\rp<\tilde{\beta}_{\mrm{GSR}}\lp n,\delta_{\mrm{F}}\rp\rp\\\nonumber
    &\overset{(a)}{=}\Pr_{\nu}\Bigg(\forall\,n\in\lbp1,\dots,\nu+ d-1\rbp:\;\log\lp\sum_{k=1}^{n}\exp\lp k\mrm{kl}\lp\hat{\mu}_{1:k},\hat{\mu}_{1:n}\rp+\lp n-k\rp\mrm{kl}\lp\hat{\mu}_{k+1:n},\hat{\mu}_{1:n}\rp\rp\rp\\\nonumber
    &\quad\quad\quad<\tilde{\beta}_{\mrm{GSR}}\lp n,\delta_{\mrm{F}}\rp\Bigg)\\\nonumber
    &\overset{(b)}{\leq}\Pr_{\nu}\Bigg(\log\lp\sum_{k=1}^{\nu+ d - 1} \exp\lp k\mrm{kl}\lp\hat{\mu}_{1:k},\hat{\mu}_{1: \nu + d - 1} \rp + \lp \nu + d - 1 - k\rp\mrm{kl}\lp\hat{\mu}_{k+1:\nu+ d-1},\hat{\mu}_{1:\nu+ d-1}\rp\rp\rp \\\nonumber
    &\quad\quad\quad<\tilde{\beta}_{\mrm{GSR}}\lp \nu + d - 1, \delta_{\mrm{F}}\rp\Bigg)\\\nonumber
    &\leq\Pr_{\nu}\lp\lp\nu-1\rp\mrm{kl}\lp\hat{\mu}_{1:\nu-1},\hat{\mu}_{1:\nu+ d-1}\rp+ d\mrm{kl}\lp\hat{\mu}_{\nu:\nu+ d-1},\hat{\mu}_{1:\nu+ d-1}\rp<\tilde{\beta}_{\mrm{GSR}}\lp\nu+ d-1,\delta_{\mrm{F}}\rp\rp\\\nonumber
    &=\Pr_{\nu}\Bigg(\frac{\nu-1}{2\sigma^{2}}\lp\hat{\mu}_{1:\nu-1}-\frac{\lp\nu-1\rp\hat{\mu}_{1:\nu-1}+ d\hat{\mu}_{\nu:\nu+ d-1}}{\nu+ d-1}\rp^{2}\\\nonumber
    &\quad\quad\quad+\frac{ d}{2\sigma^{2}}\lp\hat{\mu}_{\nu:\nu+ d-1}-\frac{\lp\nu-1\rp\hat{\mu}_{1:\nu-1}+ d\hat{\mu}_{\nu:\nu+ d-1}}{\nu+ d-1}\rp^{2}<\tilde{\beta}_{\mrm{GSR}}\lp\nu+ d-1,\delta_{\mrm{F}}\rp\Bigg)\\
    &=\Pr_{\nu}\lp\frac{\lp\nu-1\rp d}{2\sigma^{2}\lp\nu+ d-1\rp}\lp\hat{\mu}_{1:\nu-1}-\hat{\mu}_{\nu:\nu+ d-1}\rp^{2}<\tilde{\beta}_{\mrm{GSR}}\lp\nu+ d-1,\delta_{\mrm{F}}\rp\rp\label{eq:gsr-ld-1}
\end{align}
where step $(a)$ stems from Lemma \ref{lem:GLR-kl-unknown-pre-post} and step $(b)$ results from $\lbp\nu+ d-1\rbp\subseteq\lbp1,\dots,\nu+ d-1\rbp$.  Recall that $\mu_{0}$ and $\mu_{1}$ are the pre- and post-change means, and that the definition of $d$ is given in \eqref{eq:d}. In order to apply Lemma \ref{lem:sub_Gaussian_diff}, we need to convert $\lp\hat{\mu}_{1:\nu-1}-\hat{\mu}_{\nu:\nu+d-1}\rp^{2}$ in the last line of \eqref{eq:gsr-ld-1} into $\lp\lp\hat{\mu}_{1:\nu-1}-\hat{\mu}_{\nu:\nu+d-1}\rp-\lp\mu_{0}-\mu_{1}\rp\rp^{2}$. Following the same steps in \eqref{eq:implication_delay} with $\beta = \tilde{\beta}_{\mrm{GSR}}$, we can show that the event $\lbp\frac{\lp\nu-1\rp d}{2\sigma^{2}\lp\nu+d-1\rp}\lp\hat{\mu}_{1:\nu-1}-\hat{\mu}_{\nu:\nu+d-1}\rp^{2}<\tilde{\beta}_{\mrm{GSR}}\lp \nu+d-1,\delta_{\mrm{F}}\rp\rbp$ implies the event $\lbp\frac{\lp\nu-1\rp d}{2\sigma^{2}\lp\nu+d-1\rp}\lp\lp\hat{\mu}_{1:\nu-1}-\hat{\mu}_{\nu:\nu+d-1}\rp-\lp\mu_{0}-\mu_{1}\rp\rp^{2}\geq\tilde{\beta}_{\mrm{GSR}}\lp \nu+d-1,\delta_{\mrm{F}}\rp\rbp$ for any $\nu\in\lbp m+1,\dots,T-d\rbp$ with the choices of $m$ and $d$ in \eqref{eq:m} and \eqref{eq:d}, respectively. 
Then, for any $\nu\in\lbp m+1,\dots,T-d\rbp$,
\begin{align}\nonumber
    &\Pr_{\nu}\lp\tilde{\tau}_{\mrm{GSR}}\geq\nu+d\rp\\\nonumber
    &\leq\Pr_{\nu}\lp\frac{\lp\nu-1\rp d}{2\sigma^{2}\lp\nu+d-1\rp}\lp\lp\hat{\mu}_{1:\nu-1}-\hat{\mu}_{\nu:\nu+d-1}\rp-\lp\mu_{0}-\mu_{1}\rp\rp^{2}\geq\tilde{\beta}_{\mrm{GSR}}\lp\nu+d-1,\delta_{\mrm{F}}\rp\rp\\\nonumber
    &\overset{(a)}{\leq}\Pr_{\nu}\lp\frac{\lp\nu-1\rp d}{2\sigma^{2}\lp\nu+d-1\rp}\lp\lp\hat{\mu}_{1:\nu-1}-\hat{\mu}_{\nu:\nu+d-1}\rp-\lp\mu_{0}-\mu_{1}\rp\rp^{2}\geq\tilde{\beta}_{\mrm{GSR}}\lp m+d,\delta_{\mrm{F}}\rp\rp\\\nonumber
    &\overset{(b)}{\leq}\Pr_{\nu}\lp\frac{\lp\nu-1\rp d}{2\sigma^{2}\lp\nu+d-1\rp}\lp\lp\hat{\mu}_{1:\nu-1}-\hat{\mu}_{\nu:\nu+d-1}\rp-\lp\mu_{0}-\mu_{1}\rp\rp^{2}\geq\frac{5}{2}\log\lp\frac{4\lp m+d\rp^{3/2}}{\delta_{\mrm{F}}}\rp\rp\\\nonumber
    &=\Pr_{\nu}\lp\frac{\lp\nu-1\rp d}{\nu+d-1}\lp\lp\hat{\mu}_{1:\nu-1}-\hat{\mu}_{\nu:\nu+d-1}\rp-\lp\mu_{0}-\mu_{1}\rp\rp^{2}\geq2\sigma^{2}\log\lp\frac{2^{5}\lp m+d \rp^{15/4}}{\delta_{\mrm{F}}^{5/2}}\rp\rp\\\nonumber
    &\overset{(c)}{\leq}\frac{\delta_{\mrm{F}}^{5/2}}{16\lp m+d\rp^{15/4}}\\\nonumber
    &\leq\frac{\delta_{\mrm{F}}^{5/2}}{16\lp\delta_{\mrm{F}}^{5/2}2^{-16/15}\delta^{-4/15}_{\mrm{D}}\rp^{15/4}}\\
    &=\delta_{\mrm{D}}
\end{align}
where step $(a)$ is due to the fact that $\tilde{\beta}_{\mrm{GSR}}\lp n,\delta_{\mrm{F}}\rp$ is increasing with $n$, whereas step $(b)$ is owing to the fact that $\tilde{\beta}_{\mrm{GSR}}\lp n,\delta_{\mrm{F}}\rp\geq\frac{5}{2}\log\lp4n^{3/2}/\delta_{\mrm{F}}\rp$. Step $(c)$ comes from Lemma \ref{lem:sub_Gaussian_diff}. 

\end{document}